\def\BibTeX{{\rm B\kern-.05em{\sc i\kern-.025em b}\kern-.08emT\kern-.1667em\lower.7ex\hbox{E}\kern-.125emX}}
\pgfplotsset{
  compat=1.9,
  every tick label/.append style={font=\scriptsize}
}
\DeclareMathAlphabet{\mathcal}{OMS}{cmsy}{m}{n}
\newcommand{\bnm}{\begin{newmath}}
\newcommand{\enm}{\end{newmath}}
\newcommand{\bea}{\begin{neweqnarrays}}%
\newcommand{\eea}{\end{neweqnarrays}}%
\newcommand{\bne}{\begin{newequation}}
\newcommand{\ene}{\end{newequation}}
\newcommand{\bal}{\begin{newalign}}
\newcommand{\eal}{\end{newalign}}
\newenvironment{newalign}{\begin{align*}%
\setlength{\abovedisplayskip}{4pt}%
\setlength{\belowdisplayskip}{4pt}%
\setlength{\abovedisplayshortskip}{6pt}%
\setlength{\belowdisplayshortskip}{6pt} }{\end{align*}}
\newenvironment{newmath}{\begin{displaymath}%
\setlength{\abovedisplayskip}{4pt}%
\setlength{\belowdisplayskip}{4pt}%
\setlength{\abovedisplayshortskip}{6pt}%
\setlength{\belowdisplayshortskip}{6pt} }{\end{displaymath}}
\newenvironment{neweqnarrays}{\begin{eqnarray*}%
\setlength{\abovedisplayskip}{4pt}%
\setlength{\belowdisplayskip}{4pt}%
\setlength{\abovedisplayshortskip}{4pt}%
\setlength{\belowdisplayshortskip}{4pt}%
\setlength{\jot}{0.0in} }{\end{eqnarray*}}
\newenvironment{newequation}{\begin{equation}%
\setlength{\abovedisplayskip}{4pt}%
\setlength{\belowdisplayskip}{4pt}%
\setlength{\abovedisplayshortskip}{6pt}%
\setlength{\belowdisplayshortskip}{6pt} }{\end{equation}}
\newcounter{ctr}
\newenvironment{newenum}{%
\begin{list}{{\rm (\arabic{ctr})}\hfill}{\usecounter{ctr} \labelwidth=17pt%
\labelsep=3pt \leftmargin=21pt \topsep=3pt%
\setlength{\listparindent}{\saveparindent}%
\setlength{\parsep}{\saveparskip}%
\setlength{\itemsep}{2pt} }}{\end{list}}
\newlength{\saveparindent}
\newlength{\saveparskip}
\newcommand{\E}{{\rm I\kern-.3em E}}
\newcommand{\Prob}[1]{{\Pr\left[#1\right]}}
\newcommand{\CondProb}[2]{\Pr\left[#1\ \left|\right.\ #2\right]}
\newcommand{\given}{\ensuremath{\,\big|\,}}
\newcommand{\secref}[1]{\mbox{Section~\ref{#1}}}
\newcommand{\appref}[1]{\mbox{Appendix~\ref{#1}}}
\newcommand{\thref}[1]{\mbox{Theorem~\ref{#1}}}
\newcommand{\figref}[1]{\mbox{Figure~\ref{#1}}}
\renewcommand{\eqref}[1]{\mbox{Equation~(\ref{#1})}}
\newcommand{\getsr}{{\:{\leftarrow{\hspace*{-3pt}\raisebox{.75pt}{$\scriptscriptstyle\$$}}}\:}}
\newcommand{\gamesfontsize}{\small}
\newcommand{\fpage}[2]{\framebox{\begin{minipage}[t]{#1\textwidth}\setstretch{1.1}\gamesfontsize  #2 \end{minipage}}}
\newcommand{\hfpages}[3]{\hfpagess{#1}{#1}{#2}{#3}}
\newcommand{\hfpagess}[4]{
        \begin{tabular}[t]{c@{\hspace*{.5em}}c}
        \framebox{\begin{minipage}[t]{#1\textwidth}\setstretch{1.1}\gamesfontsize #3 \end{minipage}}
        &
        \framebox{\begin{minipage}[t]{#2\textwidth}\setstretch{1.1}\gamesfontsize #4 \end{minipage}}
        \end{tabular}
    }
\newcommand{\Z}{\mathbb{Z}}
\renewcommand{\key}{{\small \kappa}}
\newcommand{\myind}{\hspace*{1em}}
\newcommand{\thh}{^{\textit{th}}} 
\renewcommand{\concat}{\,\|\,}
\def \part {part}
\newcommand{\pw}{w}
\newcommand{\advA}{{\mathcal{A}}}
\newcommand{\advB}{{\mathcal{A'}}} %
\DeclareMathOperator*{\argmax}{argmax}
\renewcommand{\paragraph}[1]{\vspace*{6pt}\noindent\textbf{#1}\;}
\newcounter{mytable}
\def\mytable{\begin{centering}\refstepcounter{mytable}}
\def\endmytable{\end{centering}}
\newcounter{myfig}
\def\myfig{\begin{centering}\refstepcounter{myfig}}
\def\endmyfig{\end{centering}}
\def \blackslug{\hbox{\hskip 1pt \vrule width 4pt height 8pt
    depth 1.5pt \hskip 1pt}}
\def \qed{\quad\blackslug\lower 8.5pt\null\par}
\def \zo  {\{0,1\}}
\newcommand\ignore[1]{}
\newcounter{mynote}[section]
\newcommand{\notecolor}{blue}
\newcommand{\thenote}{\thesection.\arabic{mynote}}
\newcommand{\tnote}[1]{\ifnum\authnote=1\refstepcounter{mynote}{\bf \textcolor{\notecolor}{$\ll$TomR~\thenote: {\sf #1}$\gg$}}\fi}
\newcommand{\fixme}[1]{\ifnum\authnote=1{\textcolor{red}{[FIXME: #1]}}\fi}
\newcommand{\better}[1]{\ifnum\authnote=1{\textcolor{violet}{[BetterWord: #1]}}\fi}
\newcommand{\todo}[1]{\ifnum\authnote=1{\textcolor{red}{[TODO: #1]}}\fi}
\newcommand{\point}[1]{\ifnum\authnote=1{\textcolor{gray}{/* #1  */}}\fi}
\newcounter{rcnote}[section]
\newcommand{\rcnote}[1]{\ifnum\authnote=1\refstepcounter{rcnote}{\bf \textcolor{magenta}{$\ll$RC~\thenote: {\sf #1}$\gg$}}\fi}
\newcommand{\mytab}{\hspace*{.4cm}}
\DeclareMathSymbol{\mlq}{\mathord}{operators}{``}
\DeclareMathSymbol{\mrq}{\mathord}{operators}{`'}
\renewcommand{\comment}[1]{\ifnum\authnote=1\textcolor{gray}{/* #1 */}\fi}
\newcommand{\genfrom}[1]{{\:{\leftarrow{\hspace*{-3pt}\raisebox{.75pt}{$\scriptscriptstyle#1$}}}\:}}
\newcommand{\rhf}[2]{R_{f, \gamma}}
\def\H{\ensuremath{\mathsf{H}}\xspace}
\newcommand{\Ht}[1]{\ensuremath{\H^{(#1)}\xspace}}
\newcommand{\Htl}{\Ht{\prefixlen}}
\renewcommand{\dist}{\mathcmd{p}}
\newcommand{\distw}{\dist_w}
\newcommand{\dists}{\ensuremath{\distest_s}}
\newcommand{\distn}{\ensuremath{\hat{\dist}_n}}
\newcommand{\distest}{\ensuremath{\hat{\mathcmd{p}}}}
\newcommand{\distbreach}{\ensuremath{\hat{\dist}_{h}}}
\newcommand{\w}{{w}}
\def\Bfsb{\B_{\textnormal{\ouralgo}}}
\def\Abarfsb{\Abar_{\ouralgo}}
\def\qdist{\dists(w_{\qbar})}
\newcommand{\secloss}{\Delta}
\newcommand{\mathcmd}[1]{\ensuremath{#1}\xspace}
\newcommand{\s}{\mathcmd{s}}
\newcommand{\typo}{\mathcmd{\tilde{\w}}}
\newcommand{\PW}{\mathcal{W}}
\newcommand{\users}{\mathcal{U}}
\newcommand{\id}{u}
\newcommand{\user}{u}
\newcommand{\creds}{\mathcal{S}}
\newcommand{\cred}{s}
\newcommand{\breachDB}{\ensuremath{\tilde{\creds}}}
\newcommand{\dblen}{N}
\newcommand{\testdata}{\ensuremath{T}}
\newcommand{\sptestdata}{\ensuremath{T_{\text{sp}}}}
\newcommand{\guesses}{\ensuremath{L}}
\newcommand{\Adv}{\textnormal{\textsf{Adv}}}
\newcommand{\AdvGame}[2]{\Adv^{\footnotesize\textnormal{\textrm{#1}}}_{{\footnotesize #2}}}
\newcommand{\zxcvbn}{\textnormal{\textsf{zxcvbn}}\xspace}
\newcommand{\hlen}{\ell}
\newcommand{\prefixlen}{l}
\newcommand{\bnote}[1]{\ifnum\authnote=1\refstepcounter{mynote}{\bf \textcolor{\notecolor}{$\ll$Bijeeta~\thenote: {\sf #1}$\gg$}}\fi}
\newcommand{\wss}[1]{{\ensuremath{\w_1,\ldots,\w_{#1}}}}
\newcommand{\uwss}[1]{{\ensuremath{(u_1,\w_1),\ldots,(u_{#1},\w_{#1})}}}
\newcommand{\q}{\ensuremath{q}}
\newcommand{\qbar}{\ensuremath{\bar{q}}}
\def\bucketset{\mathcal{B}}
\newcommand{\B}{\ensuremath{{\beta}}}
\def\nbuckets{|\bucketset|}
\newcommand{\A}{{\alpha}}
\newcommand{\Abar}{\tilde{\A}^{\vphantom{a}}}
\def\bw{\gamma}
\newcommand{\Guessgzero}{\textsf{Guess}}
\newcommand{\Guessgone}{\textsf{BucketGuess}}
\newcommand{\advone}[1]{\AdvGame{\textsf{b-gs}}{#1}}
\newcommand{\advzero}{\AdvGame{\textsf{gs}}{ }}
\newcommand{\zbf}{\ensuremath{\mathbf{z}}}
\newcommand{\hibp}{\text{HIBP}\xspace}
\newcommand{\fsbp}{\text{FSB}\xspace}
\newcommand{\ouralgo}{\fsbp}
\newcommand{\google}{\text{GPC}\xspace}
\newcommand{\gpc}{\google}
\newcommand{\idbp}{\text{IDB}\xspace}
\newcommand{\hpbp}{\text{HPB}\xspace}
\newcommand{\ccc}{\text{C3}\xspace}
\newcommand{\lnote}[1]{\ifnum\authnote=1\refstepcounter{mynote}{\bf \textcolor{orange}{$\ll$Lucy~\thenote: {\sf #1}$\gg$}}\fi}
\newcommand{\corrg}{\textsf{Corr-Guess}}
\newcommand{\transform}{\tau}
\newcommand{\powerset}[1]{\mathcal{P}\left( #1 \right)}
\def\authnote{1}
\renewcommand\footnotetextcopyrightpermission[1]{}
\begin{document}

\fancyhead{}

\title{Protocols for Checking Compromised Credentials}

\author{Lucy Li}
\affiliation{
\institution{Cornell University}
}

\author{Bijeeta Pal}
\affiliation{\institution{Cornell University}}

\author{Junade Ali}
\affiliation{\institution{Cloudflare Inc.}}

\author{Nick Sullivan}
\affiliation{\institution{Cloudflare Inc.}}

\author{Rahul Chatterjee}
\affiliation{
	\institution{University of Wisconsin--Madison \& Cornell Tech}
}

\author{Thomas Ristenpart}
\affiliation{
\institution{Cornell Tech}
}

\renewcommand{\shortauthors}{Li et al.}
\begin{abstract}
  To prevent credential stuffing attacks, industry best practice now proactively
  checks if user credentials are present in known data breaches.  Recently, some
  web services, such as HaveIBeenPwned (\hibp) and Google Password Checkup (\google),
  have started providing APIs to check for breached passwords. We refer to such
  services as \emph{compromised credential checking} (\ccc) services.
  We give the first formal description of \ccc services, detailing different settings and
  operational requirements, and we give relevant threat models.

  One key security requirement is the secrecy of a user's
  passwords that are being checked.
  Current widely deployed \ccc services have the user share a
  small prefix of a hash computed over the user's password. We provide a
  framework for empirically analyzing the leakage of such protocols, showing 
  that in some contexts knowing the hash prefixes leads to a 12x increase in the efficacy of
  remote guessing attacks.
  We propose two new protocols that provide stronger protection for users'
  passwords, implement them, and show experimentally that they remain practical
  to deploy.
\end{abstract}


\maketitle

\section{introduction}
\label{sec:intro}

Password database breaches have become routine~\cite{wiki-data-leaks}. Such
breaches enable credential stuffing attacks, in which attackers try to compromise 
accounts by submitting one or more passwords that were leaked with that account from another
website.  To counter credential stuffing,  companies and
other organizations have begun 
checking if their users' passwords appear in breaches, and, if so, they deploy
further protections (e.g., resetting the user's passwords or otherwise warning the
user). Information on what usernames and passwords have appeared in breaches is
gathered either from public sources or from a third-party service.  The latter
democratizes access to leaked credentials, making it easy for others to help
their customers gain confidence that they are not using exposed passwords.  We
refer to such services as \emph{compromised credential checking} services, or
\ccc services in short.

Two prominent \ccc services already operate. HaveIBeenPwned (\hibp)~\cite{HIBP}
was deployed by Troy Hunt and CloudFlare in 2018 and is used by many web services, including
Firefox~\cite{leakcheck:firefoxmonitor}, EVE Online~\cite{eveonline}, and
1Password~\cite{onepwHIBP}.  Google released a Chrome extension called
Password Checkup (\google)~\cite{gpc-blog:2019,thomas2019protecting} in  2019
that allows users to check if their username-password pairs appear in a
compromised dataset.  Both services work by having the user share with the \ccc
server a prefix of the
hash of their password or of the hash of their username-password pair. This leaks
some information about user passwords, which is problematic should the \ccc server be
compromised or otherwise malicious. 
But until now there has been no thorough investigation into the damage from the leakage of
current \ccc services or suggestions for protocols that provide better privacy. 

We provide the first formal treatment of \ccc services for different settings, 
including an exploration of their security guarantees.
A \ccc service must provide secrecy of client credentials, and
ideally, it should also preserve secrecy of the leaked datasets held by the \ccc
server.  The computational and bandwidth overhead for the client and especially
the server should also be low. 
The server might hold billions of leaked records, precluding use of
existing cryptographic protocols for private set intersection
(PSI)~\cite{freedman2004efficient,meadows1986more}, which would use a prohibitive amount of bandwidth 
at this scale. 

Current industry-deployed \ccc services  reduce bandwidth requirements by dividing the
leaked dataset 
into buckets before executing a PSI protocol.
The client shares with the \ccc server the identifier of the bucket where their
credentials would be found, if present in the leak dataset. Then, the client and the
server engage in a protocol between the bucket held by
the server and the credential held by the client to determine if their
credential is indeed in the leak.  In current schemes, the
prefix of the hash of the user credential is used as the bucket identifier. The
client shares the hash prefix (bucket identifier) of their credentials with the
\ccc server. 

Revealing hash prefixes of credentials may be dangerous. We outline
an attack scenario against such prefix-revealing \ccc services. In
particular, we consider a conservative setting where
the \ccc server attempts to guess the password, while knowing the username and the hash prefix associated with the queried credential. We 
rigorously evaluate the security of \hibp and \google under this threat model
via a mixture of formal and empirical analysis.

We start by considering users with a password appearing in some leak and show how to
adapt a recent state-of-the-art credential tweaking attack~\cite{pal2019beyond} 
to take advantage of the knowledge of hash prefixes. In a credential tweaking
attack, one uses the leaked password to determine likely guesses (usually, small
tweaks on the leaked password). Via simulation, we show that our variant of
credential tweaking successfully compromises~$83\%$ of such accounts with 1,000 or fewer attempts, given the
transcript of a query made to the \hibp server. Without knowledge of the transcript, only 56\% of these accounts can be compromised within 1,000 guesses.

We also consider user accounts not present in a leak. 
Here we found that the leakage from the hash prefix disproportionately affects
security compared to the previous case.
For these user accounts, 
obtaining the query to \hibp enables the attacker to guess 71\% 
of passwords within 1,000 attempts, which is a 12x increase over the success with no hash prefix 
information. Similarly, for \google, our
simulation shows $33\%$ of user passwords can be guessed in $10$ or fewer
attempts (and 60\% in 1,000 attempts), should the attacker learn the hash prefix shared with the \google
server. 

The attack scenarios described are conservative because they assume the attacker
can infer which queries to the \ccc server are associated to which usernames.
This may not be always possible. Nevertheless, caution dictates that we would
prefer schemes that leak
less. We therefore present two new \ccc protocols, one that checks for leaked passwords (like
\hibp) and one that checks for leaked username-password pairs (like \google).
Like \google and \hibp, we \emph{partition} the password space before performing
PSI, but we do so in a way that reduces leakage significantly. 

Our first scheme works when only passwords are queried to the \ccc server. 
It utilizes a novel approach that we call frequency-smoothing bucketization (\fsbp). The
key idea is to use an estimate of the distribution of human-chosen passwords to assign passwords to buckets in a way that flattens the distribution
of accessed buckets.  We show how to obtain good estimates (using leaked data),
and, via simulation,
that \fsbp reduces leakage significantly (compared to \hibp). In many cases the best
attack given the information leaked by the \ccc protocol works no better than having no information
at all.
While the benefits come with some added computational complexity and bandwidth,
we show via experimentation that the operational overhead for the \fsbp \ccc
server or client is comparable with the overhead from $\google$, while also leaking much less information than hash-prefix-based C3 protocols.

We also describe a more secure bucketizing scheme that provides better
privacy/bandwidth trade-off for \ccc servers that store username-password pairs.
This scheme was also (independently) proposed
in~\cite{thomas2019protecting}, and Google states that they plan to transition to using it in
their Chrome extension. 
It is a simple modification of their current protocol. We refer to it as
\idbp, ID-based bucketization, as it uses the hash prefix of only the user
identifier for bucketization (instead of the hash prefix of the
username-password pair, as currently used by \gpc). Not having password information in the
bucket identifier hides the user's password perfectly from an attacker who
obtains the client queries (assuming that passwords are independent of
usernames). We implement \idbp and show that the average bucket size in this
setting for a hash prefix of 16~bits is similar to that of \google (average
16,122 entries per bucket, which leads to a bandwidth of 1,066~KB).

\paragraph{Contributions.}
In summary, the main contributions of this paper are the following: 
\begin{itemize}[itemsep=3pt]
\item We provide a formalization of \ccc protocols and detail the security goals for such
  services.  
\item We discuss various threat models for \ccc services, and analyze the security
  of two widely deployed \ccc protocols. We show that an
  attacker that learns the queries from a client can severely damage the
  security of the client's passwords, should they also know the client's username.
\item We give a new \ccc protocol (\ouralgo) for checking only leaked passwords that
  utilizes knowledge of the human-chosen password distribution to reduce 
  leakage. 
\item We give a new \ccc protocol for checking leaked username-password pairs (\idbp) that bucketizes using only usernames. 
\item We analyze the performance and security of both new \ccc protocols to show
  feasibility in practice.
\end{itemize}
We will release as public, open source code our server and client
implementations of \fsbp and \idbp.



\section{Overview}
\label{sec:overview}

We investigate approaches to checking credentials present in previous breaches.  
Several third party services provide credential checking, enabling
 users and companies to mitigate credential stuffing and credential
tweaking
attacks~\cite{pal2019beyond,das2014tangled,wang2016targeted}
, an increasingly daunting problem for account security.

To date, such C3 services have not received in-depth security analyses. We start by describing the
architecture of such services, and then we detail relevant threat
models.

\paragraph{C3 settings.} We provide a diagrammatic summary of the abstract
architecture of C3 services in~\figref{fig:diagram}. A C3 \emph{server} has
access to a breach database $\breachDB$. We can think of $\breachDB$ as a set of
size~$\dblen$, which consists of either a set of passwords $\{\wss{\dblen}\}$  or username-password pairs $\{\uwss{\dblen}\}$.
This corresponds to two types of C3 services --- \emph{password-only C3 services}
and \emph{username-password C3 services}.  For example, 
\hibp~\cite{HIBP:v2} is a password-only C3 service,\footnote{$\hibp$
  also allows checking if a user identifier (email) is leaked with a data
  breach. We focus on password-only and username-password C3 services.} and Google's service
\gpc~\cite{gpc-blog:2019} is an example of a username-password C3 service. 

\begin{figure}[t]
  \center
  \includegraphics[width=0.4\textwidth]{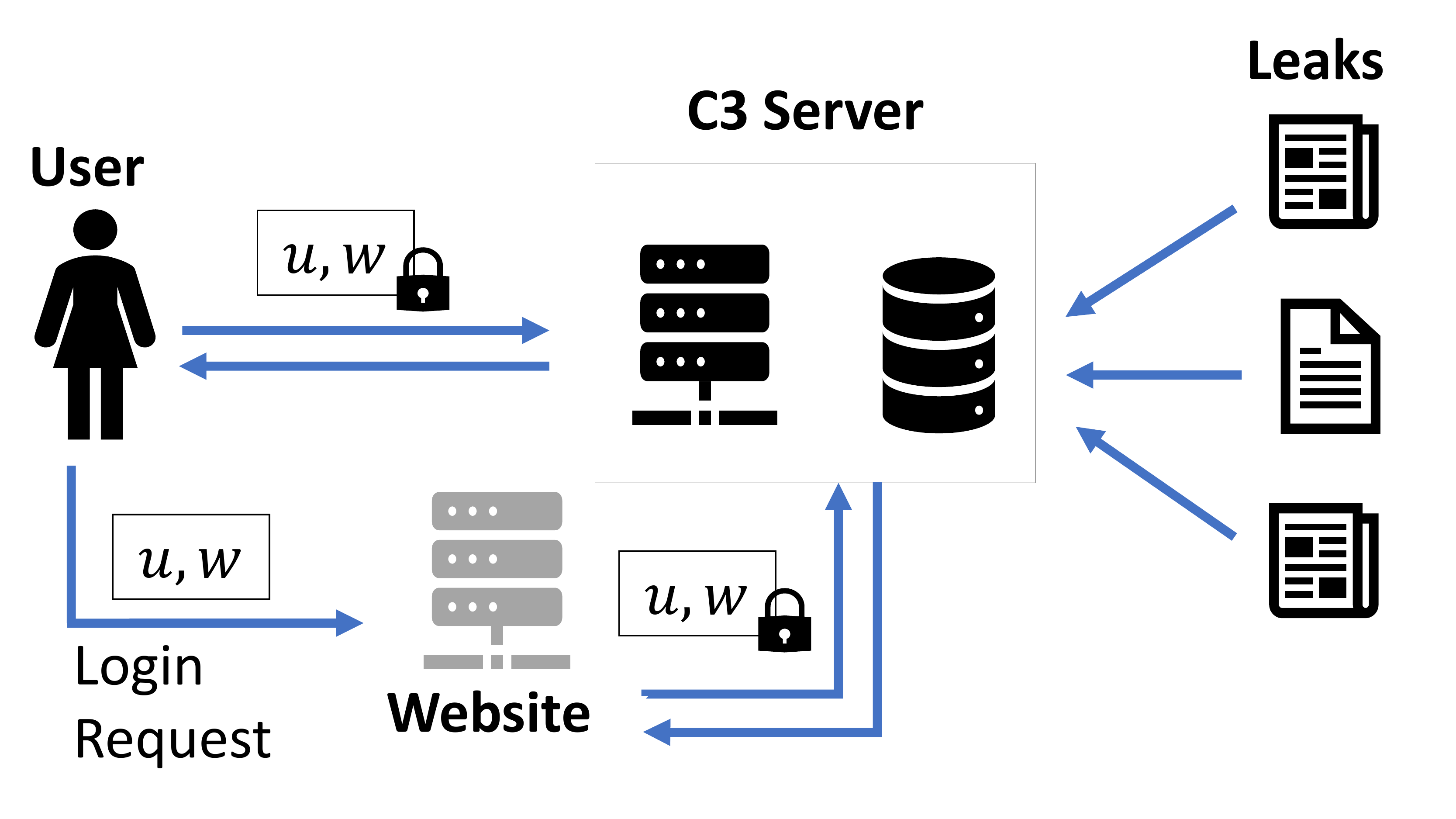}
  \caption{A \ccc service allows a client to ascertain whether a username and
    password appear in public breaches known to the service.}
  \label{fig:diagram}
\end{figure}

A \emph{client} has as input a credential $\cred=(\user,\w)$ and wants to determine if
$\cred$ is at risk due to exposure. The client and server therefore engage in a
set membership protocol to determine if $\cred \in \breachDB$. Here,
clients can be users themselves (querying the C3 service using, say, a browser extension), or
other web services can query the C3 service on behalf of their users. Clients
may make multiple queries to the C3 service, though the number of queries might
be rate limited.

The ubiquity of breaches means that, nowadays, the breach database $\breachDB$
will be quite large. A recently leaked compilation of previous breached data
contains $1.4$~billion username password pairs~\cite{leakurl}.  The
HIBP database has $501$ million unique passwords
~\cite{HIBP:v2}. Google's blog specifies that there are 4~billion
username-password pairs in their database of leaked credentials~\cite{gpc-blog:2019}. 

\ccc protocols should be able to scale to handle set membership requests for
these huge datasets for millions of requests a day. \hibp reported serving around 600,000 requests per day on average \cite{cloudflare:blog}.
The design of a C3 protocol should therefore not be
expensive for the server. Some clients may have limited computational power, so the C3 protocol should also not be expensive on the client-side. 
The number of network round trips
required must be low, and we  restrict attention
to protocols that can be completed with a single HTTPS request. Finally, we will
want to minimize bandwidth usage.

\paragraph{Threat model.} We consider the security of \ccc protocols relative to
two distinct threat models: (1) a malicious client that
wants to learn a different user's password; and (2) an honest-but-curious \ccc server that aims to learn
the password corresponding to a \ccc query. We discuss each in turn. 

A malicious client may want to use the \ccc server to discover another user's
password.  The malicious client may know the target's username and has the
ability to query the \ccc server.  The \ccc server's database $\breachDB$ should
therefore be considered confidential, and our security goal here is that each
query to the \ccc server can at most reveal whether a particular $\w$ or
$(\user,\w)$ is found within the breach database, for password-only and
username-password services, respectively. Without some way of authenticating
ownership of usernames, this seems the best possible way to limit knowledge gained from queries. We note that most breach
data is in fact publicly available, so we should assume that dedicated adversaries in this threat
model can find (a substantial fraction of) any \ccc
service's dataset. For such adversaries, there is little value in attempting to
exploit the \ccc service via queries. Nevertheless, deployments should
rate-limit clients via IP-address-based query throttling as well as via
slow-to-compute hash functions such as Argon2~\cite{argon2}.

The trickier threat model to handle is (2), and this will consume most of our
attention in this work. Here the \ccc server may be compromised or otherwise malicious,
and it attempts to exploit a client's queries to help it learn that client's
password for some other target website. We assume the adversary can submit 
password guesses to the target website, and that it knows the client's username. 
We refer to this setting as a known-username attack (KUA).
We conservatively\footnote{This is conservative because the \ccc server need
not, and should not, store passwords in-the-clear, and it should instead obfuscate 
them using an oblivious PRF.}
assume the adversary has access to the full breach dataset, and thus can
take advantage of both leaked passwords available in the breach dataset 
and  information leaked about the client's password from \ccc queries.
Looking ahead, for our protocols, the information potentially leaked from \ccc queries is the bucket identifier.

It is context-dependent whether a compromised \ccc server will be able to
mount KUAs. For example, in deployments where a web server
issues queries on behalf of their users, queries associated to
many usernames may be intermingled. In some cases, however, an adversary may be
able to link usernames to queries by observing meta-data corresponding to a query
(e.g., IP address of the querying user or the timing of a request). One can imagine cross-site scripting attacks that
somehow trigger requests to the C3 service, or the adversary might send 
tracking emails to leaked email addresses in order to 
infer an IP address associated to a username~\cite{englehardt2018never}. 
We therefore conservatively assume the malicious server's ability to know the
correct username for a query.  

In our KUA model, we focus on online attack settings, where the attacker tries
to impersonate the target user by making remote login attempts at another web
service, using guessed passwords. 
These are easy to launch and are one of the most prevalent forms of
attacks~\cite{4iqreport,enterprise20172017}.  However, in an online setting, the
web service should monitor failed login attempts and lock an account after too
many incorrect password submissions. Therefore, the attacker gets only a small
number of attempts. We use a variable $\q$, called the guessing budget, to
represent the allowed number of attempts.

Should the adversary additionally have access to password hashes stolen from the
target web site, they can instead mount an offline cracking attack. Offline
cracking could be sped up by knowledge of client \ccc queries, and one can
extend our results to consider the offline setting by increasing $\q$ to reflect
computational limits on adversaries (e.g., $\q = 10^{10}$) rather than limits on remote
login attempts.  Roughly speaking, we expect the leakage of \hibp and \gpc
to be proportionally as damaging here, and that our new protocol \fsbp will not provide as much
benefit for very large $\q$ (see discussion in \secref{sec:empirical}). \idbp will
provide no benefit to offline cracking attacks (assuming they already know the
username). 

Finally, we focus in threat model (2) on honest-but-curious adversaries, meaning that the malicious server
does not deviate from its protocol. Such actively malicious servers could lie to
the client about the contents of~$\breachDB$ in order to encourage them to pick
a weak password. Monitoring techniques might be useful to catch such
misdeeds. For the protocols we consider, we do not know of any other
active attacks advantageous to the adversary, and do not consider them further.  

\paragraph{Potential approaches.}
A \ccc protocol requires, at core, a secure set membership query. Existing
protocols for private set intersection (a generalization of set
membership)~\cite{kolesnikov2016efficient,pinkas2015phasing,pinkas2018efficient,chen2017fast} cannot currently scale to the set sizes required in \ccc
settings,  $N \approx 2^{30}$. For example, the basic PSI protocol that uses
an oblivious pseudorandom function (OPRF)~\cite{kolesnikov2016efficient} computes
$y_i = F_\key(\id_i,\pw_i)$ for $(\id_i,\pw_i) \in \breachDB$ where $F_\key$ is
the secure OPRF with secret key $\key$ (held by the server). It sends all
$y_1,\ldots,y_N$ to the client, and the client  obtains
$y = F_\key(\id,\pw)$ for its input $(\id,\pw)$ by obliviously computing it with
the server. The client can then check if $y \in \{y_1,\ldots,y_N\}$. But clearly
for large $N$ this is prohibitively expensive in terms of bandwidth.  One can
use Bloom filters to more compactly represent the set $y_1,\ldots,y_N$, but the
result is still too large. While more advanced PSI protocols exist that improve
on these results asymptotically, they are unfortunately not yet practical for
this \ccc setting~\cite{kolesnikov2016efficient,kiss2017private}.

Practical \ccc schemes therefore relax the security requirements, allowing
the protocol to leak some information about the client's queried $(\id,\pw)$ but
hopefully not too much. To date no one has investigated how damaging the leakage
of currently proposed schemes is, so we turn to doing that next. In \figref{fig:c3s-settings}, we show all the different settings for \ccc we discuss in the paper and compare their security and performance.  The security loss in \figref{fig:c3s-settings} is a comparison against an attacker that only has access to the username corresponding to a \ccc query (and not a bucket identifier).

\begin{figure}[t]
  \centering\footnotesize
  \begin{tabular}[t]{p{1.2cm}lp{2.4cm}rrr}\toprule
    Credentials checked & Name & Bucket identifier & \multicolumn{1}{p{.3cm}}{B/w (KB)} & \multicolumn{1}{p{.3cm}}{RTL (ms)} & \multicolumn{1}{p{0.8cm}}{Security loss}\\\midrule
    \multirow{2}{*}{Password}
                       & \hibp & $20$-bits of SHA1$(\w)$ & 32 & 220 & 12x\\[2pt]
                       & \ouralgo & \figref{fig:bucketize-hash-func}, $\qbar=10^2$ & 558 & 527 & 2x \\\midrule
    (Username, & \google & 16-bits of Argon2($u\Vert w$) & 1,066 & 489 &10x\\
    password) & \idbp & 16-bits of Argon2($\id$) & 1,066 & 517 & 1x \\

    \bottomrule
  \end{tabular}
  \caption{Comparison of different \ccc protocols and their bandwidth usage, round-trip latency, and security loss (compared to an attacker that has no bucket identifier information). \hibp~\cite{HIBP:v2} and
    \gpc~\cite{gpc-blog:2019} are two C3 services used in practice. We introduce
    frequency-smoothing bucketization ($\fsbp$) and identifier-based
    bucketization (\idbp). Security loss is computed assuming query budget
    $\q=10^3$ for users who have not been compromised before.}
  \label{fig:c3s-settings}
\end{figure}


\section{Bucketization Schemes and Security Models}
\label{sec:prelims}

In this section we formalize the security models for a class of C3 schemes that bucketize the breach dataset into smaller sets (buckets). 
Intuitively, a 
straightforward approach for checking whether or not a client's credentials are
present in a large set of leaked credentials hosted by a server is to divide the leaked
data into various buckets.  The client and server can then perform a
private set intersection between the user's credentials and one of the buckets
(potentially) containing that credential. The bucketization makes private
set membership tractable, while only leaking to the server that the password may lie 
in the set associated to a certain bucket. 

We give a general framework to understand the security loss and bandwidth overhead of different bucketization schemes, and we will use this framework to evaluate existing C3 services. 

\paragraph{Notation.}
To easily describe our constructions, we fix some notation. 
Let $\PW$ be the set of all passwords, and $\distw$ be the associated
probability distribution; let $\users$ be the set of all user
identifiers, and $\dist$ be the joint distribution over $\users\times\PW$.  We will use
$\creds$ to denote the domain of credentials being checked, i.e., for
password-only C3 service, $\creds = \PW$, and for username-password C3 service,
$\creds = \users\times \PW$. Below we will use $\creds$ to give a generic
scheme, and specify the setting only if necessary to distinguish.  Similarly,
$\cred\in \creds$ denotes a password or a username-password pair,
based on the setting.
Let $\breachDB$ be the set of leaked credentials, and $|\breachDB| = N$.

Let 
$\H$ be a cryptographic hash function from
$\zo^*\mapsto\zo^\hlen$, where $\hlen$ is a parameter of the system. 
We use $\bucketset$ to denote the set of buckets, and we let
$\B\colon \creds \mapsto \powerset{\bucketset} \setminus\{\varnothing\}$ be a bucketizing function which maps a credential to a set
of buckets.  A credential can be mapped to multiple buckets, and every credential
is assigned to at least one bucket. An inverse function to $\B$ is 
$\A \colon \bucketset \mapsto \powerset{\creds}$, which maps a bucket to the set of all credentials
it contains; so, $\A(b) = \left\{\cred\in\creds\given b\in\B(\s) \right\}$. Note, $\A(b)$ can be very large given it considers all credentials in $\creds$. We let 
$\Abar$ be the function that denotes the credentials in the buckets held by the C3 server,
$\Abar(b) = \A(b)\cap \breachDB$.

The client sends $b$ to the server, and
then the client and the server engage in a set intersection protocol between
$\{\s\}$ and $\Abar(b)$. 

\begin{figure}[t]
  \centering\footnotesize
  \begin{tabular}[t]{lp{6cm}}
    \toprule
    Symbol & Description\\
    \midrule
    $\id\;/\;\users$ & user identifier (e.g., email) / domain of users \\
    $\pw\;/\;\PW$ & password / domain of passwords\\
    $\creds$ & domain of credentials \\
    $\breachDB$ & set of leaked credentials, $|\breachDB| = \dblen$ \\
    \hline
    $\dist$ & distribution of username-password pairs over $\users\times\PW$\\
    $\distw$ & distribution of passwords over $\PW$\\
    $\dists$ & estimate of $\distw$ used by C3 server\\\hline
    $\q$ & query budget of an attacker\\
    $\qbar$ & parameter to $\fsbp$, estimated query budget of an attack\\
    $\B$ & function that maps a credential to a set of buckets\\
    $\A$ & function that maps a bucket to the set of credentials it contains\\
    \bottomrule
  \end{tabular}
  \caption{The notation used in this paper.}
  \label{fig:symdef}
\end{figure}

\begin{figure}[t]
\medskip
  \centering
  \centering \hfpages{0.2}{
      $\underline{\Guessgzero^{\advA}(\q)}$\vspace{2pt}\\
      $(\user, \w) \genfrom{\dist} \users\times\PW$\\
      $\{\typo_1,\ldots,\typo_{\q}\} \gets \advA(\user, \q)$\\
      return $\w \in \{\typo_1,\ldots,\typo_{\q}\}$
    }{
    $\underline{\Guessgone_{\B}^{\advB}(\q)}$\vspace{2pt}\\
    $(\user, \w) \genfrom{\dist} \users\times\PW$;\;$\cred\gets (\user,\w)$\\
    $B \gets \B(\cred)$;\;$b \getsr B$\\
    $\{\typo_1,\ldots,\typo_{\q}\} \gets \advB(u, b, \q)$\\
    return $\w \in \{\typo_1,\ldots,\typo_{\q}\}$
    }  
    \caption{The guessing games used to evaluate security. }
  \label{fig:guess-games}
\end{figure}

\paragraph{Bucketization schemes.} Bucketization  divides the credentials
held by the server into smaller buckets. 
The client can use the bucketizing function $\B$ to find the set of buckets for
a credential, and then pick one randomly to query the server.  There are
different ways to bucketize the credentials.

In the first method, which we call hash-prefix-based bucketization (\hpbp), the
credentials are partitioned based on the first $\prefixlen$ bits of a
cryptographic hash of the credentials. \google~\cite{gpc-blog:2019} and
\hibp~\cite{HIBP:v2} APIs use \hpbp. The distribution of the credentials is not
considered in \hpbp, which causes it to incur higher security loss, as we show
in~\secref{sec:hpb}.

We introduce a new bucketizing method, which we call frequency-smoothing
bucketization (\ouralgo), that takes into account the distribution of the
credentials and replicates credentials into multiple buckets if necessary. The
replication ``flattens'' the conditional distribution of passwords given a
bucket identifier, and therefore vastly reduces the security loss. We discuss \ouralgo
in more detail in \secref{sec:fsb}.

In both \hpbp and \ouralgo, the bucketization function depends on the user's
password.  We give another bucketization approach --- the most secure one
--- that bucketizes based only on the hash prefix of the user identifier. We call
this identifier-based bucketization (\idbp). This approach is only applicable for
username-password C3 services. We discuss \idbp in \secref{sec:hpb}.

\paragraph{Security measure.}
The goal of an attacker is to learn the user's password. We will focus on
online-guessing attacks, where an attacker tries to guess a user's password over
the login interface provided by a web
service. 
An account might be locked after too many incorrect guesses (e.g.,
$10$), in which case the attack fails.  Therefore, we will measure an attacker's success
given a certain guessing budget $\q$.  
We will always assume the attacker has access to the username of
the target user.

The security games are given in~\figref{fig:guess-games}.  The game $\Guessgzero$
models the situation in which no information besides the username is revealed to
the adversary about the password. In the game $\Guessgone$, the adversary also gets
access to a bucket that is chosen according to the credentials $\cred=(\user,\w)$ and the
bucketization function $\B$.

We define the advantage against a game as the maximum probability that the game
outputs 1. Therefore, we maximize the probability, over all adversaries, of the adversary winning the game in $q$ guesses. 
\bnm\advzero(\q) = \max_{\advA}\ \Prob{\Guessgzero^\advA_{}(q) \Rightarrow 1}\ ,\enm and
\bnm\advone{\B}(\q) = \max_{\advB}\ \Prob{\Guessgone^\advB_{\B}(q) \Rightarrow 1}\ .\enm The probabilities are taken over the choices
of username-password pairs and the selection of bucket via the bucketizing function $\B$.
The security loss $\secloss_\B(\q)$ of a bucketizing protocol $\B$ is defined as 
\bnm
\secloss_\B(\q)=\advone{\B}(\q) -\advzero(\q)\,. 
\enm

Note, 
\bnm
\Prob{\Guessgzero^\advA_{}(q) \Rightarrow 1} = \sum_\user \Prob{\w\in\advA(u, q) \wedge U=u } \;.\enm
To maximize this probability, the attacker must pick the $\q$ most probable
passwords for each user. Therefore, %
\bne\advzero(q) = \sum_\user \max_{\wss{q}} \sum_{i=1}^q
\Prob{W=\w_i\wedge U=\user}\;.
\label{eq:adv-zero}\ene
\noindent In $\Guessgone_{\B}$, the attacker has access to the bucket
identifier, and therefore the advantage is computed as %
\begin{align}
\advone{\B}(q) 
&= \sum_{\user}\sum_{b}
\max_{\wss{\q}} \sum_{i=1}^q \Prob{W=\w_i\wedge U=\user \wedge
	B=b}\nonumber\\
&=\sum_{\user}\sum_{b}
\max_{\substack{\uwss{\q}\\\in\A(b)}} \sum_{i=1}^q \frac{\Prob{W=\w_i\wedge U=\user}}{|\B((\user, \w_i))|}          
\label{eq:adv-one}\end{align}
The second equation follows because for $b\in\B((\user,\w))$, each bucket in $\B(w)$ is equally
likely to be chosen, so\bnm\CondProb{B=b}{W=\w\wedge U=u} = \frac{1}{|\B((\user, \w))|}\ .\enm

The joint distribution of usernames and passwords is hard to model. To simplify the
equations, we divide the users targeted by the attacker into two groups:
\emph{compromised} (users whose previously compromised accounts are available to
the attacker) and \emph{uncompromised} (users for which the attacker has no
information other than their usernames).  

We assume there is no direct correlation between the username and
password.\footnote{Though prior work~\cite{wang2016targeted,li2016study}
  suggests knowledge of the username can improve efficacy of guessing 
  passwords, the improvement is minimal.  See~\appref{app:user-pw-correlation}
  for more on this.}  Therefore, an attacker cannot use the knowledge
of only the username to tailor guesses.  This means that in the uncompromised
setting, we assume $\CondProb{W=\w}{U=u} = \Prob{W=\w}$.  Assuming independence
of usernames and passwords, we define in the uncompromised setting
\bne\label{eq:lambda}
\lambda_{\q}=\advzero(q)=\max_{\wss{\q}}\sum_{i=1}^{\q}\Prob{W=\w_i}\,.  \ene

We give analytical (using Equations~\ref{eq:adv-one} and \ref{eq:lambda}) and empirical analysis
of security in this setting, and show that the security of uncompromised users is impacted by existing C3 schemes much more than that of compromised users.

In the compromised setting, the attacker can use the username to find
other leaked passwords associated with that user, which then can be used to
tailor guesses~\cite{pal2019beyond,wang2016targeted}. Analytical bounds on the compromised setting (using Equations~\ref{eq:adv-zero} and \ref{eq:adv-one}) are less informative, so we evaluate this setting empirically in~\secref{sec:empirical}.   

\paragraph{Bandwidth.} The bandwidth required for a bucketization scheme is
determined by the size of the buckets. 
The maximum size of the buckets can be
determined using a balls-and-bins approach~\cite{Berenbrink2008511}, assuming the client picks a bucket randomly from the possible
set of buckets $\B(s)$ for a credential $s$, and $\B(s)$ also maps $s$ to a random set of buckets. 
 In total
$m=\sum_{s\in\breachDB} |\B(s)|$ credentials (balls) are ``thrown'' into $n=\nbuckets$
buckets.  
If $m > \nbuckets\cdot\log\nbuckets$, then standard results
~\cite{Berenbrink2008511}
give that the maximum number of passwords in a bucket  is less
than
$\frac{m}{n}\cdot\left(1 + \sqrt{\frac{n\log n}{m}}\right) \le
2\cdot\frac{m}{n}$, with very high probability $1 - o(1)$.  We will use this formula to compute an upper bound on the bandwidth
requirement for specific bucketization schemes.

For $\hpbp$ schemes, each credential will be mapped to a random bucket if we assume that the hash function acts as a random oracle. For $\fsbp$, since we only randomly choose the first bucket and map a credential to a range of buckets starting with the first one, it is not clear that the set of buckets a credential is mapped to is random. We also show empirically that these bounds hold for the \ccc schemes. 



\section{Hash-prefix-based Bucketization}
\label{sec:hpb}

Hash-prefix-based bucketization ($\hpbp$) schemes are a simple way to divide the
credentials stored by the \ccc server.  In this type of \ccc scheme, a prefix of the hash of
the credential is used as the criteria to group the credentials into buckets ---
all credentials that share the same hash-prefix are assigned to the same bucket.
The total number of buckets depends on $\prefixlen$, the length of the
hash-prefix. The number of credentials in the buckets depends on both~$\prefixlen$ and
$|\breachDB|$.
We will use $\Htl(\cdot)$ to denote the function that outputs the $\prefixlen$-bit prefix of the hash $\H(\cdot)$.  The client shares the hash prefix of the credential they wish to check with the
server. While a smaller hash prefix reveals less information to the server about
the user's password, it also increases the size of each bucket held by the
server, which in turn increases the bandwidth overhead. 

Hash-prefix-based bucketization is currently being used for credential checking in industry
: \hibp~\cite{HIBP:v2} and
\google~\cite{gpc-blog:2019}. We introduce a new \hpbp protocol called \idbp
that achieves zero security loss for any query budget. Below we will discuss the
design details of these three C3 protocols.

\paragraph{\hibp~\cite{HIBP:v2}.}  \hibp uses \hpbp bucketization to
provide a password-only C3 service.  They do not provide compromised username-password checking.
\hibp maintains a database of leaked passwords, which contains more than 501
million passwords~\cite{HIBP:v2}. 
They use the SHA1 hash function, with prefix length $\prefixlen = 20$; the leaked
dataset is \emph{partitioned} into $2^{20}$ buckets.  The prefix length is
chosen to ensure no bucket is too small or too big. With $\prefixlen=20$,
the smallest bucket has 381 passwords, and  the largest bucket has 584 passwords~\cite{leakcheck-kanon} .  This effectively makes
the user's password $k$-anonymous. However, $k$-anonymity provides limited protection,
as shown by numerous prior
works~\cite{naranyanan2008robust,machanavajjhala2006diversity,zhang2007information}
and by our security evaluation. 
 
The passwords are hashed using SHA1 and indexed by their hash prefix for
fast retrieval.  A client computes the SHA1 hash of their password $\w$ and
queries \hibp with the $20$-bit prefix of the hash; the server responds with all
the hashes that share the same 20-bit prefix. The client then checks if the
full SHA1 hash of $\w$ is present among the set of hashes sent by the
server. This is a weak form of PSI that does not hide the leaked passwords from
the client --- the client learns the SHA1 hash of the leaked passwords and can
perform brute force cracking to recover those passwords.

\hibp justifies this design choice by observing that passwords in the server
side leaked dataset are publicly available for download on the Internet.
Therefore, \hibp lets anyone download the hashed passwords and usernames. This
can be useful for parties who want to host their own leak checking service
without relying on HIBP.  However, keeping the leaked dataset up-to-date can be
challenging, making a third-party C3 service  preferable.

\hibp trades server side privacy for protocol simplicity.  The protocol
also allows utilization of caching on content delivery networks (CDN),
such as Cloudflare.\footnote{\url{https://www.cloudflare.com/}}  Caching
helps \hibp to be able to serve 8 million requests a day with 99\%
cache hit rate (as of August 2018)~\cite{hibp:cache}. 
The human-chosen password distribution is ``heavy-headed'',
that is a small number of passwords are chosen by a large number of
users. Therefore, a small number of passwords are queried a large number of
times, which in turn makes CDN caching much more
effective.  

\paragraph{\google~\cite{gpc-blog:2019, thomas2019protecting}.}  
Google provides a username-password \ccc service, called Password Checkup (\google).
The client --- a browser extension --- computes the hash of the username and
password together using the Argon2 hash function (configured to use a single thread, 256 MB of memory, and a time cost of three) with the first $\prefixlen=16$
bits to determine the bucket identifier. After determining the bucket, the client
engages in a private set intersection (PSI) protocol with the server. The full
algorithm is given in~\figref{fig:google}. \google uses a computationally expensive hash function to make it more difficult for an adversary to make a large number of queries to the server. 

\google uses an OPRF-based PSI protocol~\cite{thomas2019protecting}.  Let $F_{a}(x)$ be a function that first calls the hash function $\H$ on $x$, then maps the hash output onto an elliptic curve point, and finally, exponentiates the elliptic curve point (using elliptic curve group operations) to the power $a$. Therefore it holds that $(F_{a}(x))^b=F_{ab}(x)$. 

The server has a secret key $\key$ which it uses to compute the values 
$y_i = F_\key(\id_i\|\pw_i)$ for each $(\id_i,\pw_i)$ pair in the breach dataset.  The client shares with the server the bucket identifier~$b$ and the PRF output $x =F_r(\id\|\w)$, for some
randomly sampled $r$. The server returns the bucket $\zbf_b = \{y_i\given \H(\id_i\|\w_i) = b\}$
and $y=x^{\key}$. Finally, the client completes the OPRF computation by computing $\tilde{x} = y^{\frac{1}{r}} = F_\key(\id\|\w)$, and checking if $\tilde{x} \in \zbf_b$.

The \google protocol is significantly more complex than \hibp, and it does not
allow easy caching by CDNs. However, it provides secrecy of server-side leaked
data --- the best case attack is to follow the protocol to brute-force check if
a password is present in the leak database.

\begin{figure}[t]
  \centering\footnotesize
  \fpage{0.277}{\footnotesize
    \textbf{Precomputation by C3 Server}\\
    Let $\breachDB = \{(\user_1,\w_1), \ldots, (\user_\dblen,\w_\dblen)\}$\\
    $\forall j\in [0,\ldots, {2^\prefixlen-1}]$\\
    $\zbf_j\gets\left\{ F_\key(\id_i\|\w_i)\given \Htl(\id\|\w) =j)\right\}$\\[2pt]
    \fbox{${\zbf_j\gets\left\{ F_\key(\id_i\|\w_i)\given  \Htl(\id) =j)\right\}}$}\\[5pt]
  \[
\begin{array}{l c l}\footnotesize
\textbf{Client} &  & \textbf{\ccc server}\\\midrule
\textnormal{Input: } (\id,\w) & & \textnormal{Input: } \key, \zbf\\
  r \getsr \Z_q & & \\
  x \gets F_r(\user\|\pw) &&\\[2pt]
  b \leftarrow  \Htl(\user\|\pw) & & \\[2pt]
  \fbox{$b  \leftarrow \Htl(\user)$} & \xrightarrow{\hspace{1em}x,b\hspace{1em}} & \\
                & \xleftarrow{\hspace{1em} y,\zbf_b \hspace{1em}}& y = x^{\key} \\
  \tilde{x} \leftarrow y^{\frac{1}{r}}& & \\
  \text{Return } \tilde{x}\in \zbf_b & & 
\end{array}
\]
}
  \caption{Algorithms for \google, and the change in \idbp given in the box. $F_{(\cdot)}(\cdot)$ is  a PRF. }
  \label{fig:google}
\end{figure}

\paragraph{Bandwidth.} \hpbp assigns each credential to only one bucket;
therefore, $m = \sum_{\w\in\breachDB}|\B(\w)| = |\breachDB| = N$. The total
number of buckets is $n=2^\prefixlen$. Following the discussion
from~\secref{sec:prelims}, the maximum bandwidth for a \hpbp C3S should be no more than
$2\cdot\frac{m}{n} = 2\cdot\frac{N}{2^\prefixlen}$.
 
We experimentally verified bandwidth usage, and the 
sizes of the buckets for \hibp, \google, and \idbp are given in~\secref{sec:perf}.

\paragraph{Security.} 
\hpbp schemes like \hibp and \google expose a prefix of the user's password (or username-password pair) to the
server.  As discussed earlier, we assume the attacker knows the username of the
target user. In the uncompromised setting --- where the user identifier does not
appear in the leaked data available to the attacker, we show that giving the attacker
the hash-prefix with a guessing budget of $\q$ queries is equivalent to giving as many as $\q\cdot\nbuckets$ queries (with no hash-prefix) to the attacker. As a reminder, $\nbuckets$ is the number of buckets. For example, consider a \ccc scheme that uses a 5-character hash prefix as a bucket identifier ($2^{20}$ buckets). If an attacker has 10 guesses to figure out a password, then given a bucket identifier, they can eliminate any guesses on their list that don't belong in that bucket. If their original guesses are distributed equally across all buckets, then knowing the 5-character hash prefix can help them get through around $\q\cdot2^{20}$ of those guesses. 
\begin{theorem}
  Let $\B_{\textnormal{\hpbp}}:\creds\mapsto\bucketset$ be the bucketization
  scheme that, for a credential $s\in\creds$, chooses a bucket that is a
  function of $\Htl(s)$, where $s$ contains the user's password.  The advantage
  of an attacker in this setting against previously uncompromised users is
  \bnm\advone{\B_{\textnormal{\hpbp}}}(\q) \le \advzero(\q\cdot\nbuckets)\;.\enm
  \label{th:hpb}\vspace{-1em}
\end{theorem}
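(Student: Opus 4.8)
The plan is to evaluate both advantages in the uncompromised setting directly from the closed forms already derived, namely \eqref{eq:adv-one} and \eqref{eq:lambda}. Since $\B_{\textnormal{\hpbp}}$ maps a credential $s$ to the \emph{single} bucket determined by $\Htl(s)$, we have $|\B(s)| = 1$ for every $s \in \creds$, so the denominators in \eqref{eq:adv-one} all equal $1$, and the optimal adversary, given the username $\user$ and the bucket $b$, simply names the $\q$ passwords $\w$ of largest probability among those for which $(\user,\w)$ lands in bucket $b$. This gives
\bnm
\advone{\B_{\textnormal{\hpbp}}}(\q) \;=\; \sum_{\user}\sum_{b}\ \max_{\substack{\w_1,\ldots,\w_{\q}\,:\\(\user,\w_j)\in\A(b)}}\ \sum_{i=1}^{\q}\Prob{W=\w_i\wedge U=\user}\,.
\enm
In the uncompromised case $W$ and $U$ are independent, so $\Prob{W=\w_i\wedge U=\user}=\Prob{U=\user}\cdot\Prob{W=\w_i}$; pulling the $\Prob{U=\user}$ factor out and using $\sum_{\user}\Prob{U=\user}=1$ reduces the problem to bounding the inner password-side quantity for each fixed $\user$.

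The crux is a partition argument. Fix $\user$. The map $\w\mapsto$ (the bucket of $(\user,\w)$) partitions $\PW$ into at most $\nbuckets=|\bucketset|$ disjoint fibers, one per bucket value. For a given bucket $b$, the inner maximum above picks the (at most) $\q$ most probable passwords lying in fiber $b$; because fibers for distinct $b$ are disjoint, the union of these per-bucket guess lists over all $b$ is a set $G_{\user}\subseteq\PW$ of at most $\q\cdot\nbuckets$ \emph{distinct} passwords. Since the sum of $\Prob{W=\cdot}$ over any set of at most $\q\nbuckets$ distinct passwords is at most the sum of the $\q\nbuckets$ largest password probabilities, we get
\bnm
\sum_{b}\ \max_{\substack{\w_1,\ldots,\w_{\q}\,:\\(\user,\w_j)\in\A(b)}}\ \sum_{i=1}^{\q}\Prob{W=\w_i}\;=\;\sum_{\w\in G_{\user}}\Prob{W=\w}\;\le\;\lambda_{\q\cdot\nbuckets}\,,
\enm
with $\lambda_{\q\cdot\nbuckets}=\advzero(\q\cdot\nbuckets)$ by \eqref{eq:lambda}. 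Substituting back into the factored sum over $\user$ yields $\advone{\B_{\textnormal{\hpbp}}}(\q)\le\sum_{\user}\Prob{U=\user}\cdot\lambda_{\q\cdot\nbuckets}=\advzero(\q\cdot\nbuckets)$, which is the claim.

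I expect the only delicate point to be the partition step in the username--password setting, where the bucket is a function of $\Htl(\user\|\w)$ rather than of the password alone: one must fix $\user$ \emph{before} invoking the partition of $\PW$, and it is precisely the independence of $W$ and $U$ (the defining feature of the uncompromised setting) that allows the per-username bounds to be recombined into the global bound without loss. The remaining ingredients --- collapsing \eqref{eq:adv-one} using $|\B(s)|=1$, and the elementary fact that a sum of probabilities over few distinct outcomes is dominated by the top few --- are routine, so no lengthy computation is required.
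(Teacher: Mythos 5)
Your proof is correct and follows essentially the same route as the paper's: exploit $|\B_{\hpbp}(s)|=1$ so each password lies in exactly one bucket, observe that the per-bucket top-$\q$ lists are therefore disjoint and their union contains at most $\q\cdot\nbuckets$ distinct passwords, and dominate that union's total probability by $\lambda_{\q\cdot\nbuckets}=\advzero(\q\cdot\nbuckets)$. Your treatment is in fact slightly more careful than the paper's terse version in handling the sum over usernames via independence before invoking the partition argument, but the underlying idea is identical.
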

\begin{proof}
First, note that $|\B_{\hpbp}(s)|=1$, for any input $s$, as every password is assigned to exactly one of the buckets. Following the discussion from~\secref{sec:prelims}, assuming independence of usernames and passwords in the uncompromised setting, we can compute
the advantage against game $\Guessgone$ as 
\begin{align*}
  \advone{\B_{\hpbp}}(\q) 
  &= \sum_{b\in\bucketset} \max_{\substack{\wss{\q}\\\in\A(b)}}\sum_{i=1}^{\q}\Prob{W=\w_i}
  \le \advzero(\q\cdot\nbuckets)\,.
\end{align*}
We relax the $\A(b)$ notation to denote set of passwords (instead of
username-password pairs) assigned to a bucket $b$. The inequality follows from the fact that each password is present in only one bucket. If we sum up the probabilities of the top $q$ passwords in each bucket, the result will be at most the sum of the probabilities of the top $q\cdot \nbuckets$ passwords. Therefore, the maximum advantage achievable is 
$\advzero(\q\cdot\nbuckets)$. \qed
\end{proof}

\thref{th:hpb} only provides an upper bound on the security loss. Moreover, for
the compromised setting, the analytical formula in \eqref{eq:adv-one} is not very informative. So, we use
empiricism to find the effective security loss against compromised and
uncompromised users. We report all security simulation results
in~\secref{sec:empirical}.  
Notably, with \gpc using a hash prefix length $\prefixlen=16$, an attacker can
guess passwords of  59.7\% of (previously uncompromised) user accounts 
in fewer than 1,000 guesses, over a 10x increase from the
percent it can compromise without access to the hash
prefix. (See~\secref{sec:empirical} for more results.)

\paragraph{Identifier-based bucketization (\idbp).}
As our security analysis and simulation show, the security degradation of \hpbp
can be high. The main issue with those protocols is that the bucket identifier is
a deterministic function of the user password. We give a new C3 protocol that
uses \hpbp style bucketing, but based only on username.  We call this
identifier-based bucketization (\idbp).  \idbp is defined for username-password
C3 schemes.

\idbp is a slight modification of the protocol used by \google --- we use
the hash-prefix of the username, $\Htl(u)$, instead of the hash-prefix of the
username-password combination, $\Htl(u\concat w)$, as a bucket identifier. The
scheme is described in~\figref{fig:google}, using the changes in the boxed code.
The bucket identifier is computed completely independently of the password
(assuming the username is independent of the password). Therefore, the attacker gets
no additional advantage by knowing the bucket identifier.

Because \idbp uses the hash-prefix of the username as the bucket identifier, two
hash computations are required on the client side for each query (as opposed to
one for \google). With most modern devices, this is not a significant computing
burden, but the protocol latency may be impacted, since we use a slow hash
(Argon2) for hashing both the username and the password. We show experimentally how the extra hash computation affects
the latency of \idbp in \secref{sec:perf}.

Since in \idbp, the bucket identifier does not depend on the user's password,
the conditional probability of the password given the bucket identifier remains
the same as the probability without knowing the bucket identifier. As a result, exposing the bucket identifier does not lead to security loss.
\begin{theorem}
  With the  \textnormal{\idbp} protocol, for  all $q\geq 0$
  \bnm\advone{\textnormal{\idbp}}(\q) = \advzero(\q).\enm
  \label{thm:idbp}
  \vspace{-1em}
\end{theorem}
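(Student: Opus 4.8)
The plan is to establish the two inequalities $\advone{\textnormal{\idbp}}(\q) \le \advzero(\q)$ and $\advzero(\q) \le \advone{\textnormal{\idbp}}(\q)$ by exhibiting mutual reductions between the games $\Guessgone_{\textnormal{\idbp}}$ and $\Guessgzero$. The single structural fact that makes this work is that in \idbp the bucketization function is $\B((\user,\w)) = \{\Htl(\user)\}$: a one-element set whose content is a deterministic function of the username $\user$ and does not involve the password at all. Since the $\Guessgzero$ adversary is also handed $\user$, the bucket $b$ appearing in $\Guessgone_{\textnormal{\idbp}}$ is something that adversary can recompute for itself, so the two games are information-theoretically equivalent.

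For the direction $\advone{\textnormal{\idbp}}(\q) \le \advzero(\q)$, I would take an arbitrary adversary $\advB$ for $\Guessgone_{\textnormal{\idbp}}$ and build $\advA$ for $\Guessgzero$ that on input $(\user,\q)$ sets $b \gets \Htl(\user)$ and returns $\advB(\user, b, \q)$. Because $\B((\user,\w))$ is the singleton $\{\Htl(\user)\}$, the line $b \getsr \B((\user,\w))$ in $\Guessgone_{\textnormal{\idbp}}$ produces exactly $\Htl(\user)$ with probability one, so $\advA$ reproduces $\advB$'s view exactly; hence $\Prob{\Guessgone^{\advB}_{\textnormal{\idbp}}(\q)\Rightarrow 1} = \Prob{\Guessgzero^{\advA}(\q)\Rightarrow 1} \le \advzero(\q)$, and taking the maximum over $\advB$ gives the bound. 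For the reverse direction, given any $\advA$ for $\Guessgzero$, let $\advB$ for $\Guessgone_{\textnormal{\idbp}}$ simply ignore its bucket input and return $\advA(\user,\q)$; then $\advB$ wins precisely when $\advA$ does, so $\advone{\textnormal{\idbp}}(\q) \ge \advzero(\q)$. Combining the two bounds yields equality for every $\q \ge 0$.

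A point worth flagging is that, unlike the proof of \thref{th:hpb}, this argument uses nothing about the joint distribution $\dist$ of usernames and passwords --- no independence assumption and no restriction to uncompromised users --- which is exactly why the statement holds unconditionally for all $\q \ge 0$. If one prefers an analytic derivation, substituting $|\B((\user,\w_i))| = 1$ into \eqref{eq:adv-one} and noting that, for each fixed $\user$, only the bucket $b = \Htl(\user)$ contributes and that the membership constraint $(\user,\w_i)\in\A(b)$ then holds automatically, collapses the right-hand side term-by-term into \eqref{eq:adv-zero}.

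There is no serious obstacle in this proof; the whole content is the observation that revealing a function of data the adversary already possesses cannot help it. The only things to be careful about are bookkeeping: checking that the singleton bucket set makes the sampling $b \getsr B$ deterministic (so the simulations are perfect rather than merely close), and orienting the two reductions so that the suprema over adversaries combine in the right direction to give an equality rather than a one-sided bound.
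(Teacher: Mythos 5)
Your proof is correct, and it reaches the same essential insight as the paper --- that the \idbp bucket is a deterministic function of the username alone, which the $\Guessgzero$ adversary already holds --- but it packages the argument differently. The paper proves the equality by direct algebraic manipulation of the advantage expression: it uses the conditional independence $\CondProb{B=b}{W=\w\wedge U=\user}=\CondProb{B=b}{U=\user}$ to factor the bucket probability out of the inner maximum, and then collapses $\sum_b \CondProb{B=b}{U=\user}=1$ to recover \eqref{eq:adv-zero}. You instead give a two-way simulation: an adversary against $\Guessgone_{\idbp}$ can be run by a $\Guessgzero$ adversary that recomputes $b=\Htl(\user)$ itself (a perfect simulation, since $\B((\user,\w))$ is the singleton $\{\Htl(\user)\}$ and the sampling $b\getsr B$ is degenerate), and conversely any $\Guessgzero$ adversary wins $\Guessgone_{\idbp}$ by ignoring $b$. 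The reduction framing is arguably cleaner and makes explicit a point the paper leaves implicit: no assumption about the joint distribution of usernames and passwords, and no restriction to uncompromised users, is needed --- only that the revealed bucket is a function of information the baseline adversary already has. Your closing remark that the analytic route amounts to substituting $|\B((\user,\w_i))|=1$ into \eqref{eq:adv-one} and observing that only $b=\Htl(\user)$ contributes (with the constraint $(\user,\w_i)\in\A(b)$ then automatic) is exactly the paper's calculation, so the two presentations are interchangeable.
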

\begin{proof}
	Because the \idbp bucketization scheme does not depend on the password,
	$\CondProb{B=b}{W=w\wedge U=u} = \CondProb{B=b}{ U=u}$.
	
	We can upper bound the success rate of an adversary in the $\Guessgone_{\idbp}$ game by
	\begin{align*}
	&\advone{\idbp}(\q) \\
	&= \sum_\user\sum_{b}
	\max_{\wss{\q}} \sum_{i=1}^q \Prob{W=\w_i\wedge U=\user}\cdot\CondProb{B=b}{U=\user}\\
	&= \sum_\user\left(\sum_{b} \CondProb{B=b}{U=\user}\right)
	\max_{\wss{\q}} \sum_{i=1}^q \Prob{W=\w_i\wedge U=\user}\\
	&=\advzero(\q)
	\end{align*}
	The first step
	follows from independence of password and bucket choice, and the third step is
	true because there is only one bucket for each username.
	\qed
\end{proof}

We would like to note, though \idbp reveals nothing about the password, learning
the username becomes easier (compared to \gpc) --- an attacker can narrow down
the potential users after seeing the bucket identifier. While this can be
concerning for user's privacy, we believe the benefit of not revealing anything about the
user's password outweighs the risk.

Unfortunately, \idbp does not work for the password-only \ccc setting because it requires that the server store username-password pairs. In the next section we 
introduce a more secure password-only \ccc scheme. 


\section{Frequency-Smoothing Bucketization}\label{sec:fsb}
In the previous section we showed how to build a username-password \ccc service that
does not degrade security. However, many services, such as $\hibp$, only
provide a password-only \ccc service. HIBP does not store username-password pairs
so, should the \hibp server ever get compromised, an attacker cannot use their
leak database to mount credential stuffing attacks. 
Unfortunately, \idbp cannot
be extended in any useful way to protect password-only \ccc services.

Therefore, we introduce a new bucketization scheme to build secure password-only
\ccc services. We call this scheme frequency-smoothing bucketization (\fsbp). \fsbp
assigns a password to multiple buckets based on its probability --- frequent
passwords are assigned to many buckets.  Replicating a password into multiple
buckets effectively reduces the conditional probabilities of that password
given a bucket identifier. We do so in a way that makes the conditional
probabilities of popular passwords similar to those of unpopular passwords to make it harder for the attacker to guess the correct password. 
\ouralgo, however, is only effective for non-uniform credential distributions,
such as password distributions.\footnote{Usernames (e.g., emails) are unique for
  each users, so the distribution of usernames and username-password
  pairs are close to uniform.} Therefore, \fsbp cannot be used to build a
username-password \ccc
service.

Implementing \ouralgo requires knowledge of the distribution of human-chosen
passwords. Of course, obtaining precise knowledge of the password distribution
can be difficult; therefore, we will use an estimated password distribution, denoted by $\dists$. 
Another parameter of $\ouralgo$ is $\qbar$, which is an estimate of the attacker's query budget. We show that if the actual query budget $\q\le \qbar$, \ouralgo
has zero security loss. Larger $\qbar$ will provide better security; however, it
also means more replication of the passwords and larger bucket
sizes. So, $\qbar$ can be tuned to balance between security and bandwidth. Below
we will give the two main algorithms of the \ouralgo scheme: $\Bfsb$ and $\Abarfsb$,
 followed by a bandwidth and security analysis.

\paragraph{Bucketizing function ($\Bfsb$).}
To map passwords to buckets, we use a hash function
$f:\PW\mapsto\Z_{\nbuckets}$. The algorithm for bucketization $\B_\ouralgo(\w)$ is given in
\figref{fig:bucketize-hash-func}. The parameter $\qbar$ is used
in the following way:
 $\B$ replicates the most probable $\qbar$
passwords, $\PW_{\qbar}$, across all $\nbuckets$ buckets. Each of the remaining
passwords are replicated proportional to their probability.  A password $\w$
with probability $\dists(\w)$ is replicated exactly
$\bw=\ceil{\frac{\nbuckets\cdot \dists(\w)}{\qdist}}$ times, where
$\w_{\qbar}$ is the $\qbar\thh$ most likely
password. 
Exactly which buckets a password is assigned to are determined using the hash
function $f$.  Each bucket is assigned an identifier between $[0, \nbuckets-1]$.
A password $\w$ is assigned to the buckets whose identifiers fall in the range
$\left[f(\w), f(\w) + \bw-1\right]$. The range can wrap around. For
example, if $f(\w) + \bw > \nbuckets$, then the password is assigned to the
buckets in the range $[0, f(\w) + \bw -1 \mod \nbuckets]$ and
$[f(\w), |\bucketset|-1]$.

\begin{figure}[t]
  \centering 
    \hfpagess{0.2}{0.2}{\footnotesize
    $\underline{\Bfsb(\w):}$\\[2pt]
    $\bw \gets \min\left\{\nbuckets, \ceil{\frac{\nbuckets\cdot \dists(\w)}{\qdist}}\right\}$\\
    $s \gets f(\w)$\\
    If $s + \bw < \nbuckets$ then \\
    \myind $\;r\gets [s, s+\bw-1]$\\
    Else \\
    \myind $\;r\gets [0, s+\bw -1 \mod \nbuckets]$\\
    \myind $\;r\gets r\cup[s, \nbuckets-1]$\\
    Return $r$
  }
  {\footnotesize
      $\underline{\Abarfsb(b):}$\\
      /* returns $\{\w\in\breachDB \given b\in \B(\w)\}$ */\\
      $ A \gets \PW_{\bar{q}}$\\
      For $\w \in \breachDB\setminus \PW_{\bar{q}}$ do\\
      \mytab If $b\in\Bfsb(w)$ then \\
      \mytab \mytab $A\gets A\cup \{\w\}$\\
      return $A$
  }
  \caption{Bucketizing function $\Bfsb$ for assigning passwords to buckets in \ouralgo.
    Here $\dists$ is the distribution of passwords; $\PW_{\qbar}$ is the set of
    top-${\qbar}$ passwords according to $\dists$; $\bucketset$ is the set of buckets;
    $f$ is a  hash function $f\colon W\mapsto \Z_{|B|}$; $\breachDB$ is the set
    of passwords hosted by the server.  }
    \label{fig:bucketize-hash-func}
\end{figure}

\paragraph{Bucket retrieving function ($\Abar$).}
Retrieving passwords assigned to a bucket is challenging in $\fsbp$. An
inefficient --- linear in $\dblen$ --- implementation of $\Abar$ is given
in~\figref{fig:bucketize-hash-func}. Storing the contents of each bucket
separately is not feasible, since the number of buckets in $\ouralgo$ can be
very large, $\nbuckets \approx \dblen$.
To solve the problem, we utilize the structure of the bucketizing procedure where
passwords are assigned to buckets in continuous intervals.  This allows us to use an 
interval tree~\cite{wiki:intervaltrees} data structure to store the intervals
for all of the passwords. Interval trees allow fast queries to retrieve the set of
intervals that contain a queried point (or interval) --- exactly what is needed
to instantiate~$\Abar$.

This efficiency comes with increased storage cost: storing $\dblen$
entries in an interval tree requires $\bigO{\dblen\log\dblen}$ storage.  The
tree can be built in $\bigO{\dblen\log\dblen}$ time, and each query
takes 
$\bigO{\log \dblen + |\Abar(b)|}$ time. 
The big-O notation only hides small constants.

\paragraph{Estimating password distributions.} To construct the bucketization
algorithm for \ouralgo, the server needs an estimate of the password
distribution $\distw$.  This estimate will be used by both the server and the
client to assign passwords to buckets. One possible estimate is the
histogram of the passwords in the leaked data~$\breachDB$. Histogram estimates
are typically accurate for popular passwords, but such estimates are not complete
--- passwords that are not in the leaked dataset will have zero probability
according to this estimate. Moreover, sending the histogram over to the client
is expensive in terms of bandwidth, and it may leak too much information about the dataset.  We also considered password
strength meters, such as \zxcvbn~\cite{wheeler2016zxcvbn} as a proxy for a 
probability estimate. However, this estimate turned out to be too coarse for our purposes. For
example, more than $10^5$ passwords had a ``probability'' of greater than $10^{-3}$.

We build a $3$-gram password model $\distn$ using the leaked passwords present
in $\breachDB$.  Markov models or $n$-gram models are shown to be effective at
estimating human-chosen password distributions~\cite{ma2014study}, and they are
very fast to train and run (unlike neural network based password distribution
estimators, such as~\cite{melicher2016fast}).  However, we found the $n$-gram
model assigns very low probabilities to popular passwords. The sum of the
probabilities of the top 1,000 passwords as estimated by the 3-gram model is
only 0.032, whereas those top 1,000 passwords are chosen by $6.5\%$ of users.

We therefore use a combined approach that uses a histogram model for the popular
passwords and the 3-gram model for the rest of the distribution. Such combined
techniques are also used in practice for password strength
estimation~\cite{wheeler2016zxcvbn,melicher2016fast}. Let $\dists$ be the
estimated password distribution used by \ouralgo. Let $\distbreach$ be the
distribution of passwords implied by the histogram of passwords present in
$\breachDB$. %
Let $\breachDB_t$ be the set of the $t$ most probable passwords according to
$\distbreach$. We used $t=10^6$. Then, the final estimate is
 \bnm \dists(\w) = \left\{
  \begin{array}{ll}
    \distbreach(\w)& \text{if } \w\in\breachDB_t\ , \\
    \distn(w)\cdot\frac{1-\sum_{\typo\in\breachDB_t}\distbreach(\w)}{1-\sum_{\typo\in\breachDB_t}\distn(\w)} & \text{otherwise.}
  \end{array}
  \right.
\enm
Note that instead of using the 3-gram probabilities directly, we multiply them by a normalization factor that allows $\sum_w \distest(w)=1$, assuming that the same is true for the distributions $\distbreach$ and $\distn$.

\paragraph{Bandwidth.}  We use the formulation provided in~\secref{sec:prelims}
to compute the bandwidth requirement for $\ouralgo$. In this case,
$m = \nbuckets\cdot\qbar + \frac{\nbuckets}{\qdist} + {\dblen}$, and
$n=\nbuckets$.  Therefore, the maximum size of a bucket is with high probability
less than
$2\cdot\left({\qbar} + \frac{1}{\qdist} + \frac{\dblen}{\nbuckets}\right)$.  The
details of this analysis are given in~\appref{app:fsbp}.

In practice, we can choose the number of buckets to be such that
$|\bucketset| = \dblen$. Then, the number of passwords in a bucket depends primarily on
the parameter ${\qbar}$. Note, bucket size increases with $\qbar$. 

\paragraph{Security analysis.}
We show that there is no security loss in the uncompromised setting
for \ouralgo when the actual
number of guesses $\q$ is less than the parameter $\qbar$ and the estimate $\distest$ is accurate. We also give a
bound for the security loss when $\q$ exceeds $\qbar$.

\begin{theorem}\label{thm:bucket-entropy}
  Let $\ouralgo$ be a frequency based bucketization scheme that ensures 
  $\forall \w \in \PW,\;$ $|\Bfsb(w)| = \min\left\{|\bucketset|,\,
     \ceil{\frac{|\bucketset|\cdot\dists(\w)}{\dists(\w_{\bar{q}})}}\right\}$.
   Assuming that the distribution estimate $\dists=\distw$, then for the uncompromised users, 
	\begin{newenum}
        \item 
          $\advone{\Bfsb}(\q) = \advzero(\q)\;\;$
          for $\q\le\qbar$, and\label{claim:nosecloss} 
          \item 
            for $\q > \qbar\;$,
            \bnm\frac{\lambda_{q} - \lambda_{{\bar{q}}}}{2} \le\; \secloss_{q} \; \leq (q-{\bar{q}})\cdot\qdist - (\lambda_{q}-\lambda_{\bar{q}}) \enm
            \label{claim:upper-lower-bounds}
	\end{newenum}\vspace{-1em}
\end{theorem}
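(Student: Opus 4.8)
The whole argument stays in the uncompromised setting, so the first thing I would record are the simplifications it buys: $U$ and $W$ are independent, and since the theorem assumes $\dists=\distw$ we have $\Pr[W=\w]=\dists(\w)$ and $\advzero(q)=\lambda_q$ is simply the sum of the $q$ largest password probabilities. Abbreviate $\nbuckets=|\bucketset|$, $\qdist=\dists(w_{\qbar})$, and for a password $\w$ set $n_\w=|\Bfsb(\w)|$ and $g(\w)=\dists(\w)/n_\w$, the weight $\w$ contributes to any one of its buckets. Starting from \eqref{eq:adv-one} and collapsing the sum over $\user$ exactly as in the proof of \thref{th:hpb} (in the password-only setting $\B((\user,\w))=\Bfsb(\w)$, the max is independent of $\user$, and $\sum_\user\Pr[U=\user]=1$), one gets $\advone{\Bfsb}(q)=\sum_{b}\max_{\w_1,\dots,\w_q\in\A(b)}\sum_{i\le q}g(\w_i)$: in each bucket an optimal attacker grabs the $q$ highest-weight passwords lying in that bucket.

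\textbf{The easy parts.} The engine of the proof is a structural fact immediate from $n_\w=\min\{\nbuckets,\lceil \nbuckets\dists(\w)/\qdist\rceil\}$: if $\w\in\PW_{\qbar}$ then $\dists(\w)\ge\qdist$, so $n_\w=\nbuckets$, $g(\w)=\dists(\w)/\nbuckets\ge\qdist/\nbuckets$, and $\w$ lies in \emph{every} bucket; whereas if $\w\notin\PW_{\qbar}$ then $n_\w\ge \nbuckets\dists(\w)/\qdist$, hence $g(\w)\le\qdist/\nbuckets$. Thus in every bucket the top-$\qbar$ passwords weakly dominate every other password, so an optimal attacker always takes all of $\PW_{\qbar}$ first. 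For part~(1), $q\le\qbar$: the optimum in each bucket is $q$ passwords drawn from $\PW_{\qbar}$, and since on $\PW_{\qbar}$ the $g$-order agrees with the $\dists$-order this is the global top $q$, worth $\lambda_q/\nbuckets$ per bucket; summing over the $\nbuckets$ buckets gives $\advone{\Bfsb}(q)=\lambda_q=\advzero(q)$, i.e. zero security loss. For the upper bound in part~(2), $q>\qbar$: in each bucket the $\qbar$ dominant picks are exactly $\PW_{\qbar}$ (total $\lambda_{\qbar}/\nbuckets$) and each of the remaining $q-\qbar$ picks is a tail password worth at most $\qdist/\nbuckets$; summing and using $\lambda_q=\lambda_{\qbar}+\sum_{\qbar<i\le q}\dists(w_i^\ast)$ yields $\secloss_q\le(q-\qbar)\qdist-(\lambda_q-\lambda_{\qbar})$.

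\textbf{The lower bound, and the obstacle.} The lower bound in part~(2) is where the real work is. I would exhibit one concrete attacker --- in every bucket guess all of $\PW_{\qbar}$ and then greedily add the $q-\qbar$ highest-weight tail passwords present --- and lower bound its value $\lambda_{\qbar}+T$, where $T$ is the total tail weight collected. Two observations drive the estimate of $T$: (a) merely placing the global runners-up $w_{\qbar+1}^\ast,\dots,w_q^\ast$ into all of their buckets already gives $T\ge\sum_{\qbar<i\le q}n_{w_i^\ast}\,g(w_i^\ast)=\sum_{\qbar<i\le q}\dists(w_i^\ast)=\lambda_q-\lambda_{\qbar}$, since each such password pays back its full mass; and (b) a replication-overhead bound: writing $t=\nbuckets\dists(\w)/\qdist$ for a tail password, $n_\w=\lceil t\rceil$ with $t/\lceil t\rceil>1/2$ whenever $t>1$ (and $n_\w=1$ when $t\le1$), so every tail slot the attacker fills is worth more than $\qdist/(2\nbuckets)$ and the $(q-\qbar)\nbuckets$ ``extra'' slots are collectively worth at least $(q-\qbar)\qdist/2$. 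Combining (a) and (b) --- accounting for how many of the extra slots are consumed by the runners-up versus by fresh tail passwords --- gives the claimed $\secloss_q\ge(\lambda_q-\lambda_{\qbar})/2$. The delicate points, which is where I expect the effort to go, are: the per-bucket capacity of $q-\qbar$ tail slots becomes binding when a heavy tail password lands in many buckets at once, so the greedy fill must be argued not to overflow; the argument needs enough tail mass to actually fill those slots, which holds because $\A(b)$ ranges over the whole password universe and $\distw$ is heavy-tailed; and extracting exactly the constant $1/2$ hinges on the tight inequality $t/\lceil t\rceil\ge1/2$, so it is essentially the ceiling in the definition of $\Bfsb$ that sets this constant. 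Everything else is bookkeeping on top of \eqref{eq:adv-one} and \figref{fig:bucketize-hash-func}.
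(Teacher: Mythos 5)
Your setup, part~(1), and the upper bound in part~(2) are correct and follow essentially the same route as the paper: the same reduction of \eqref{eq:adv-one} to $\sum_b \max_{\w_1,\dots,\w_q\in\A(b)}\sum_i \dists(\w_i)/|\Bfsb(\w_i)|$, the same structural dichotomy (every $\w\in\PW_{\qbar}$ lands in all $\nbuckets$ buckets with per-bucket weight $\dists(\w)/\nbuckets\ge\qdist/\nbuckets$, every tail password has per-bucket weight at most $\qdist/\nbuckets$), and the same conclusion $\advone{\Bfsb}(\q)\le\lambda_{\qbar}+(\q-\qbar)\cdot\qdist$. Your write-up of part~(1) is in fact cleaner than the paper's, which only records $\advone{\Bfsb}(\q)\le\lambda_{\qbar}$ for $\q\le\qbar$ rather than the claimed equality.

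The genuine gap is exactly where you flagged it: the lower bound in part~(2) is never actually derived, and the combination of your observations (a) and (b) does not yield it. Observation (a) gives $\advone{\Bfsb}(\q)\ge\lambda_{\qbar}+(\lambda_\q-\lambda_{\qbar})=\lambda_\q$, i.e.\ $\secloss_\q\ge 0$; observation (b) gives $\secloss_\q\ge(\q-\qbar)\qdist/2-(\lambda_\q-\lambda_{\qbar})$, and since $\lambda_\q-\lambda_{\qbar}$ can be as large as $(\q-\qbar)\qdist$, neither bound, nor any convex combination of them, reaches $(\lambda_\q-\lambda_{\qbar})/2$. Indeed the claimed inequality fails outright in the extreme case where every runner-up password has probability exactly $\qdist$: then each is replicated into all $\nbuckets$ buckets, every bucket's optimum is the global top-$\q$ list, $\advone{\Bfsb}(\q)=\lambda_\q$ and $\secloss_\q=0$, yet $(\lambda_\q-\lambda_{\qbar})/2=(\q-\qbar)\qdist/2>0$. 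You should be aware that the paper's own proof of this step has the same defect: its chain of inequalities correctly establishes $\advone{\Bfsb}(\q)\ge(\lambda_\q+\lambda_{\qbar})/2$, but the final line concludes $\secloss_\q\ge(\lambda_\q-\lambda_{\qbar})/2$ when what actually follows from $\secloss_\q=\advone{\Bfsb}(\q)-\lambda_\q$ is the vacuous $\secloss_\q\ge-(\lambda_\q-\lambda_{\qbar})/2$. The derived inequality $\advone{\Bfsb}(\q)\ge(\lambda_\q+\lambda_{\qbar})/2$ is meaningful only as a lower bound on $\advone{\Bfsb}(\q)-\lambda_{\qbar}$ (the gain over a $\qbar$-guess baseline), not on the security loss as defined. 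So your instinct that ``this is where the real work is'' was right, and no amount of bookkeeping on the extra slots will close it for the statement as written.
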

Recall that the  probabilities $\lambda_q$ are defined  in \eqref{eq:lambda}.
We include the full proof for \thref{thm:bucket-entropy}  in~\appref{sec:bucket-entropy-proof}. 
Intuitively, since the top $\q$ passwords are repeated across all buckets, having a bucket identifier
does not allow an attacker to more easily guess these $\q$ passwords.
Moreover, the conditional probability of these $\q$ passwords given the bucket is greater than 
that of any other password in the bucket. Therefore, the attacker's best
choice is to guess the top $\q$ passwords, meaning that it does not get any
additional advantage when $\q\le\qbar$, leading to part
(\ref{claim:nosecloss}) of the theorem.

The proof of part (\ref{claim:upper-lower-bounds}) follows from the upper and
lower bounds on the number of buckets each password beyond the top ${\q}$ is
placed within. The bounds we prove show that the additional advantage in guessing
the password in $\q$ queries is less than the number of additional queries times
the probability of the $\qbar\thh$ password and at least half the difference in
the guessing probabilities $\lambda_\q$ and $\lambda_{\qbar}$.

Note that this analysis of security loss is based on the assumption that the \fsbp
scheme has access to the precise password distribution, $\dists=\distw$. We empirically analyze the security loss in~\secref{sec:empirical} for $\dists\neq \distw$, in both the compromised and uncompromised settings. 


\section{Empirical Security Evaluation}
\label{sec:empirical}

In this section we empirically evaluate and compare the security loss
for different password-only \ccc schemes we have discussed so far --- hash-prefix-based
bucketization (\hpbp) and frequency-smoothing bucketization (\ouralgo). 

We focus on known-username attacks (KUA), since in many deployment settings a
curious (or compromised) \ccc server can figure out the username of the querying
user. 
We separate our analysis into two settings: previously \emph{compromised} users,
where the attacker has access to one or more existing passwords of the target
user, and previously \emph{uncompromised} users, where no password
corresponding to the user is known to the attacker (or present in the breached data).

We also focus on what the honest-but-curious \ccc server can learn from knowing the bucket. In our experiment, we show the success rate of an adversary that knows the exact leak dataset used by the server. We expect that an adversary that doesn't know the exact leak dataset will have slightly lower success rates. 

First we will look into the unrestricted setting where no password policy is
enforced, and the attacker and the \ccc server have the same amount of information
about the password distribution. In the second experiment, we analyze
the effect on security of giving the attacker more information compared to the \ccc server
(defender) by having a password policy that the attacker is aware of but the \ccc
server is not.

\begin{figure}[t]
  \footnotesize
  \begin{tabular}{lr|rr|rr}\toprule
    & $\breachDB$ & $\testdata$ & $\testdata\cap\breachDB$ & $\sptestdata$ & $\sptestdata\cap\breachDB$\\\midrule
    \# users    & 901.4 & 12.9 & 5.9 (46\%) & 8.4 & 3.9 (46\%)\\
    \# passwords& 435.9 & 8.9 & 5.7 (64\%) & 6.7 & 3.9 (59\%)\\
    \# user-pw pairs & 1,316.6 & 13.1 & 3.2 (24\%) & 8.5 & 2.0 (23\%)\\
    \bottomrule
  \end{tabular}
  \caption{Number of entries (in millions) in the breach dataset $\breachDB$,
    test dataset $\testdata$, and the site-policy test subset $\sptestdata$. Also reported  are
    the intersections (of users, passwords, and user-password pairs, separately)
    between the test dataset entries and the whole breach dataset that the
    attacker has access to. The percentage values refer to the fraction of the
    values in each test set that also appear in the intersections. }
  \label{fig:dataset-stats}
  
\end{figure}

\paragraph{Password breach dataset.} We used the same breach dataset used
in~\cite{pal2019beyond}. 
The dataset was derived from a previous breach compilation~\cite{leakurl}
dataset containing about $1.4$~billion username-password pairs. We chose to use this dataset rather than, for example, the password breach dataset from \hibp, because it contains username-password pairs.

We cleaned the data by removing non-ASCII characters and passwords longer than 30
characters. We also combined username-password pairs with the same case-insensitive username, and we removed users with over 1,000 passwords, as they didn't seem to be associated to real accounts. The authors of~\cite{pal2019beyond} also joined accounts with
similar usernames and passwords using a method they called the \emph{mixed
  method}. 
We joined the dataset using the same mixed method, but we also kept the usernames with only one email and password.

The final dataset consists of about 1.32 billion username-password
pairs.\footnote{Note, there are duplicate username-password pairs in this
  dataset. } We
remove $1\%$ of username-password pairs to use as test data, denoted as
$\testdata$.  The remaining $99\%$ of the data is used to simulate the database of
leaked credentials $\breachDB$.  For the experiments with an enforced password
policy, we took the username-password pairs in $\testdata$ that met the requirements of the password
policy to create $\sptestdata$. We use $\sptestdata$ to simulate queries from a website which
only allows passwords that are at least 8 characters long and are not present in Twitter's list of banned passwords~\cite{twitterbanned}.
For all attack simulations, the target user-password pairs are sampled from the test dataset
$\testdata$ (or $\sptestdata$).

In~\figref{fig:dataset-stats}, we report some statistics about $\testdata$,
$\sptestdata$, and $\breachDB$.  Notably, 5.9 million (46\%) of the users
in $\testdata$ are also present in $\breachDB$. 
Among the username-password pairs, 3.2 million (24\%) of the pairs in $\testdata$ are
also present in $\breachDB$. This means an attacker will be able to compromise
about half of the previously compromised accounts trivially with credential stuffing.  In the
site-policy enforced test data $\sptestdata$, a similar proportion of the users (46\%) and username-password pairs (23\%) are also present in $\breachDB$.

\newcommand{\testdatacomp}{\testdata_\textsf{comp}}
\newcommand{\testdatauncomp}{\testdata_\textsf{uncomp}}
\newcommand{\guessest}{\guesses_t}
\newcommand{\guessesh}{\guesses_h}
\newcommand{\guessesn}{\guesses_n}

\paragraph{Experiment setup.} 
We want to understand the impact of revealing a bucket identifier on the security of uncompromised and compromised
users separately.  As we can see from~\figref{fig:dataset-stats}, a large
proportion of users in $\testdata$ are also present in $\breachDB$. We therefore
split $\testdata$ into two parts: one with only username-password pairs from compromised users (users with at least one password present in $\breachDB$), $\testdatacomp$, and another
with only pairs from uncompromised users (users with no passwords present in $\breachDB$), $\testdatauncomp$. We generate two sets of random
samples of 5,000 username-password pairs, one from $\testdatacomp$, and another
from $\testdatauncomp$. We chose 5,000 because this number of samples led to a low standard deviation (as reported in~\figref{fig:attack-comp}).
For each pair $(\id, \w)$, we run the games $\Guessgzero$ and $\Guessgone$ as
specified in~\figref{fig:guess-games}. 
We record the results for guessing budgets of $\q\in\{1,\, 10,\, 10^2,\, 10^3\}$. We repeat
each of the experiments $5$ times and report the averages
in~\figref{fig:attack-comp}.

For \hpbp, we compared implementations using hash prefixes of lengths
$\prefixlen\in\{12, 16, 20\}$. We use the SHA256 hash function with a salt, though
the choice of hash function does not have a noticeable impact on the results. 

For $\ouralgo$, we used interval tree data structures to store the leaked
passwords in $\breachDB$ for fast retrieval of $\Abar(b)$. We used
$\nbuckets=2^{30}$ buckets, and the hash function $f$ is set to
$f(x) = \Ht{30}(x)$, the 30-bit prefix of the (salted) SHA256
hash of the password. 

\paragraph{Attack strategy.}
The attacker's goal is to maximize its success in winning the games
$\Guessgzero$ and $\Guessgone$. In \eqref{eq:adv-zero} and \eqref{eq:adv-one} we
outline the advantage of attackers against $\Guessgzero$ and $\Guessgone$, and
thereby specify the best strategies for attacks. $\Guessgzero$ denotes the
baseline attack success rate in a scenario where the attacker does not have
access to bucket identifiers corresponding to users' passwords. Therefore
the best strategy for the attacker $\advA$ is to output the $\q$ most probable passwords
according to its  knowledge of the password distribution.

The optimal attack strategy for $\advB$ in $\Guessgone$ will be to find a list
of passwords according to the following equation, 
\bnm\argmax_{\substack{\wss{q}\\
    b\in\B((u,\w_i))}} \sum_{i=1}^q
\frac{\CondProb{W=\w_i}{U=u}}{|\B((u,\w_i))|},
\enm 
where the bucket identifier $b$ and user identifier $\user$ are provided to the
attacker. This is equivalent to taking the top-$\q$ passwords in the set $\A(b)$
ordered by $\CondProb{W=\w}{U=\user}/|\B((u,
\w))|$.  

We compute the list of guesses outputted by the attacker for a user $\user$ and bucket $b$
in the following way. For the compromised users, i.e., if
$(\user,\cdot)\in\breachDB$, the attacker first considers the passwords known to be associated to that user and the list of $10^4$ targeted guesses
generated based on the credential tweaking attack introduced
in~\cite{pal2019beyond}. If any of these passwords belong to $\A(b)$ they are
guessed first. This step is skipped for uncompromised users.

For the remaining guesses, we first construct a list of candidates $\guesses$
consisting of all $436$ million passwords present in the breached database
$\breachDB$ sorted by their frequencies, followed by $500\times10^6$ passwords
generated from the $3$-gram password distribution model~$\distn$.
Each password $\w$ in $\guesses$ is assigned a weight $\dists(\w)/|\B((u,\w))|$ (See~\secref{sec:fsb} for details on $\dists$ and $\distn$). The
list $\guesses$ is pruned to only contain unique guesses. Note $\guesses$
is constructed independent of the username or bucket identifier, and it is reordered based
on the weight values. Therefore, it is constructed once for each bucketization
strategy.  Finally, based on the bucket identifier $b$, the remaining guesses are chosen from
$\{\A(b)\cap(u,\w)\mid \w\in\guesses\}$ in descending order of weight.

For the $\hpbp$ implementation, each password is mapped to one bucket, so
$|\B(\w)|=1$ for all $\w$. For $\ouralgo$, $|\B(\cdot)|$ can be calculated using
the equation in~\thref{thm:bucket-entropy}.

Since we are estimating the values to be used in the equation, the attack is no longer optimal. However, the attack we use still performs quite well against existing \ccc protocols, which already shows that they leak too much information. An optimal attack can only perform better. 

\begin{figure*}[th]
  \footnotesize
\begin{tabular}[t]{llrr|rrrr|rrrr}\toprule
	\multirow{2}{*}{Protocol}
	& \multirow{2}{*}{Params}
	&\multicolumn{2}{c|}{Bucket size}
	& \multicolumn{4}{c|}{Uncompromised} 
	& \multicolumn{4}{c}{Compromised}\\
	&&Avg&max& $\q=1$ & $\q=10$ & $\q=10^2$ & $\q=10^3$ & $\q=1$ & $\q=10$ & $\q=10^2$ & $\q=10^3$\\\midrule
	Baseline &N/A&N/A&N/A&0.7 $(\pm 0.1)$& 1.5 $(\pm 0.1)$ & 2.9 $(\pm 0.3)$ & 5.8 $(\pm 0.4)$& 41.1 $(\pm 0.4)$ &  51.1 $(\pm 0.8)$ &  53.3 $(\pm 0.9)$ &  55.7 $(\pm 1.0)$\\ \midrule 
	\multirow{3}{*}{\hpbp} & $\prefixlen=20^\ddagger$& 413 & 491 & 32.9 $(\pm 0.5)$&  49.5 $(\pm 0.3)$&  62.5 $(\pm 0.4)$&  71.1 $(\pm 0.5)$ & 67.3 $(\pm 0.8)$& 74.5 $(\pm 0.6)$ & 79.4 $(\pm 0.6)$& 82.9 $(\pm 0.4)$\\ 
	& $\prefixlen=16^\dagger$ & 6602 & 6891 & 17.9 $(\pm 0.5)$&  33.4 $(\pm 0.6)$&  47.3 $(\pm 0.3)$&  59.7 $(\pm 0.2)$& 61.1 $(\pm 0.9)$& 67.4 $(\pm 0.8)$& 73.6 $(\pm 0.6)$& 78.2 $(\pm 0.7)$\\
	&$\prefixlen=12$ & 105642 & 106668 & 8.2 $(\pm 0.4)$&  17.5 $(\pm 0.6)$&  30.7 $(\pm 0.6)$&  44.4 $(\pm 0.4)$& 56.3 $(\pm 1.0)$& 60.8 $(\pm 1.0)$& 66.5 $(\pm 0.8)$& 72.3 $(\pm 0.6)$\\\midrule
	\multirow{4}{*}{\ouralgo} & $\qbar=1$& 83 & 122 & 0.7 $(\pm 0.1)$&  4.7 $(\pm 0.4)$&  69.8 $(\pm 0.5)$&  71.1 $(\pm 0.5)$& 53.7 $(\pm 0.9)$& 55.7 $(\pm 0.9)$& 82.6 $(\pm 0.4)$& 83.0 $(\pm 0.4)$\\
	&$\qbar=10$& 852 & 965 & 0.7 $(\pm 0.1)$&  1.5 $(\pm 0.1)$&  5.3 $(\pm 0.3)$&  70.8 $(\pm 0.5)$&  52.8 $(\pm 0.9)$& 54.2 $(\pm 1.0)$& 56.0 $(\pm 0.9)$& 83.0 $(\pm 0.4)$	\\
	&$\qbar=10^2$& 6299 & 6602 & 0.7 $(\pm 0.1)$&  1.5 $(\pm 0.1)$&  2.9 $(\pm 0.3)$&  8.0 $(\pm 0.4)$& 51.9 $(\pm 0.8)$& 53.8 $(\pm 0.9)$& 54.8 $(\pm 1.0)$& 57.1 $(\pm 1.0)$\\ 
	&$\qbar=10^3$& 25191 & 25718 & 0.7 $(\pm 1.0)$&  1.5 $(\pm 0.1)$&  2.9 $(\pm 0.3)$&  5.8 $(\pm 0.4)$& 51.4 $(\pm 0.9)$& 53.2 $(\pm 0.9)$& 54.7 $(\pm 1.0)$& 55.9 $(\pm 0.9)$\\ 
	\bottomrule
\end{tabular}\vspace{1em}
  \flushleft{$^\ddagger$ \hibp uses $\prefixlen=20$ for its password-only \ccc service. \;\;
    $^\dagger$ \google uses $\prefixlen=16$ for username-password \ccc service. 
  }
  \caption[comparison]{Comparison of attack success rate given $\q$ queries on
    different password-only \ccc settings. All success rates are in percent (\%)
    of the total number of samples (25,000). The standard deviations across the 5 independent experiments of 5,000 samples each are given in the parentheses. Bucket size, the number of passwords
    associated to a bucket, is measured on a random sample of 10,000
    buckets. 
  }
  \label{fig:attack-comp}
\end{figure*}

\paragraph{Results.} 
We report the success rates of the attack simulations
in~\figref{fig:attack-comp}. The baseline success rate (first row) is the
advantage $\advzero$, computed using the same attack strategy stated above
except 
with no information about the bucket identifier.  The following rows record
the success rate of the attack for \hpbp and \ouralgo with different parameter choices.
The estimated security loss ($\secloss_\q$) can be calculated by subtracting the baseline success rate from the \hpbp and \ouralgo attack success rates.

The security loss from using \hpbp is large, especially for previously uncompromised users.
Accessibility to the $\prefixlen=20$-bit hash prefix, used by
\hibp~\cite{HIBP:v2}, allows an attacker to compromise 32.9\% of previously uncompromised
users in just one guess. In fewer than $10^3$ guesses, that attacker can
compromise more than 70\% of the accounts (12x more than the baseline success
rate with $10^3$ guesses). Google Password Checkup (\gpc) uses $\prefixlen=16$
for its username-password \ccc service. Against $\gpc$, an attacker only needs 10 guesses per account to compromise 33\% of accounts. Reducing the prefix length
$\prefixlen$ can decrease the attacker's advantage. However, that would also increase
the bucket size. As we see for $\prefixlen=12$, the average bucket size is 105,642,
so the bandwidth required to perform the credential check would be
high.

\ouralgo resists guessing attacks much better than $\hpbp$ does. For $\q\le\qbar$ the
attacker gets no additional advantage, even with the estimated password
distribution $\dists$. 
The security loss for \ouralgo when $\q>\qbar$ is much smaller than that of \hpbp, even with smaller bucket sizes. For example, 
the additional advantage over the baseline against \ouralgo with $\q=100$ and
$\qbar=10$ is only 2.4\%, despite \ouralgo also having smaller bucket
sizes than \hpbp with $\prefixlen=16$. Similarly for $\qbar=100$,
$\;\secloss_{10^3} = 2.2\%$. This is because the conditional distribution of
passwords given an \fsbp bucket identifier is nearly uniform, making it
harder for an attacker to guess the correct password in the bucket $\A(b)$ in
$\q$ guesses. 

For previously compromised users --- users present in $\breachDB$ --- even the
baseline success rate is very high: 41\% of account passwords can be guessed in 1
guess and 56\% can be guessed in fewer than 1,000 guesses. The advantage is
supplemented even further with access to the hash prefix.  As per the guessing strategy,
the attacker first guesses the leaked passwords that are both associated to the user and
 in $\A(b)$. This turns out to be very effective.
Due to the high baseline success rate the relative increase is low; nevertheless, in
total, an attacker can guess the passwords of 83\% of previously compromised
users in fewer than 1,000 guesses.  For \fsbp, the security loss for compromised users is comparable
to the loss against uncompromised users for $\q\le\qbar$. Particularly for
$\qbar=10$ and $\q=100$, the attacker's additional success for a previously compromised user is only 2.7\% higher than the baseline. Similarly, for
$\qbar=100$ an attacker gets at most 1.4\% additional advantage for a guessing
budget of $\q$=1,000.  Interestingly, \fsbp performs significantly worse for
compromised users compared to uncompromised users for $\q=1$. This is because
the \fsbp bucketing strategy does not take into account targeted password
distributions, and the first guess in the compromised setting is based on the
credential tweaking attack. 

In our simulation, previously compromised users made up around 46\% of the test
set. We could proportionally combine the success rates against uncompromised and compromised users to obtain an overall attack success rate. However, it is unclear what the actual proportion would be in the real world, so we choose
not to combine results from the two settings. 

\paragraph{Password policy experiment.} 
In the previous set of experiments, we assumed that the \ccc server and the attacker
use the same estimate of the password distribution. To explore a situation in which the attacker
has a better estimate of the password distribution than the \ccc server, we simulated a
website which enforces a password policy. We assume that the policy is known to
the attacker but not to the \ccc server.  

For our sample password policy, we required that passwords have at least 8
characters and that they must not be on Twitter's banned password
list~\cite{twitterbanned}. The test samples are drawn from $\sptestdata$,
username-password pairs from $\testdata$ where passwords follow this policy. The attacker is also given the ability to tailor their guesses to this
policy. The server still stores all passwords in $\breachDB$, without regard to
this policy. Notably, the $\ouralgo$ scheme relies on a good estimate of the
password distribution to be effective in distributing passwords evenly across
buckets. Its estimate, when compared to
the distribution of passwords in $\sptestdata$, should be less accurate than it was in the regular simulation, when
compared to the password distribution from $\testdata$.

We chose the parameters $k=16$ for \hpbp and $\bar{q}=100$ for \ouralgo, because
they were the most representative of how the \hpbp and \ouralgo bucketization
schemes compare to each other. These parameters also lead to similar bucket
sizes, with around 6,500 passwords per bucket. Overall, we see that the success
rate of an attacker decreases in these simulations compared to the general
  experiments (without a password policy). This is because after removing popular
passwords, the remaining set of passwords that we can choose from has higher
entropy, and each password is harder to guess.  \ouralgo still 
defends much better against the attack than \hpbp does, even though the password
distribution estimate used by the \ouralgo implementation is quite inaccurate, 
especially at the head of the distribution. The inaccuracy stems from  \ouralgo assigning larger
probability estimates to passwords that are banned according to the password 
policy. 

We also see that due to the inaccurate estimate by the \ccc server for \ouralgo, we start to see some security loss for an adversary with guessing budget $q=100$. In the general simulation, the password estimate $\dists$ used by the server was closer to $\dist$, so we didn't have any noticeable security loss where $q\leq \bar{q}$.

\begin{figure}[t]
  \centering\footnotesize
\begin{tabular}[t]{l|rrrr|rrrr}\toprule
	\multirow{2}{*}{Protocol}
	& \multicolumn{4}{c|}{Uncompromised} 
	& \multicolumn{4}{c}{Compromised}\\
	& $\q=1$ & $10$ & $10^2$ & $10^3$ & $\q=1$ & $10$ & $10^2$ & $10^3$\\\midrule
	Baseline& 0.1 & 0.5 & 1.3 & 3.4 & 42.2 & 49.0 & 49.8 & 51.1\\ \midrule 
	{\hpbp} ($\prefixlen=16$) & 12.6 & 25.9 & 36.3 & 48.9 & 54.6 & 59.9 & 65.9 & 70.3  \\\midrule
	{\ouralgo} ($\qbar=10^2$) & 0.1 & 0.5 & 1.5 & 13.2 & 49.2 & 50.0 & 50.4 & 54.9 \\\bottomrule
\end{tabular}
  \caption{Attack success rate (in \%) comparison for \hpbp with
    $\prefixlen=16$ (effectively \gpc) and \fsbp with $\qbar=10^2$ for password
    policy simulation. The first row records the baseline success rate
    $\advzero(\q)$. There were 5,000 samples each from the uncompromised and compromised settings.}
  \label{fig:site-policy-sim}
\end{figure}



\section{Performance Evaluation}\label{sec:perf}
We implement the different approaches to checking compromised
credentials and evaluate their computational overheads. For fair comparison, in
addition to the algorithms we propose, \fsbp and \idbp, we also implement \hibp
and \gpc with our breach dataset. 

\paragraph{Setup.}
We build \ccc services as serverless web applications that provide REST APIs.  We
used AWS Lambda  \cite{lambda} for 
the server-side computation and Amazon DynamoDB \cite{dynamoDb}
to store the data. The benefit of using
AWS Lambda is it can be easily deployed as Lambda@Edge and integrated with
Amazon's content delivery network (CDN), called CloudFront \cite{cloudfront}. (\hibp uses
Cloudflare as CDN to serve more than 600,000 requests per day \cite{cloudflare:blog}.)  We used
Javascript to implement the server and the client side functionalities. The server is implemented as a Node-JS app.  We provisioned the Lambda workers to have a maximum of 
 3~GB of memory. For cryptographic operations, we 
used a Node-JS library called Crypto~\cite{crypto-nodejs}.

For pre-processing and pre-computation of the data we used a desktop with an Intel Core i9
processor and 128~GB RAM.  Though some of the computation (e.g., hash
computations) can be expedited using GPUs, we did not use any for our
experiment. We used the same machine to act as the client. The round trip network latency
of the Lambda API from the client machine is about 130~milliseconds. 

The breach dataset we used is the one described in~\figref{fig:dataset-stats}. It contains 436 million unique passwords and 1,317 million unique username-password pairs. 

To measure the performance of each scheme, we pick 20 random passwords from the
test set $\testdata$ and run the full \ccc protocol with each one. We report the
average time taken 
for each run in~\figref{fig:latency}. In the figure, we also give the breakdown
of the time taken by the server and the client for different operations. The
network latency had very high standard deviation (25\%), though all other
measurements had low ($<1\%$) standard deviations compared to their mean values.

\paragraph{$\hibp$.}
The implementation of \hibp is the simplest among the four schemes. The set of
passwords in $\breachDB$ is hashed using SHA256 and split into $2^{20}$ buckets
based on the first 20~bits of the hash value (we picked SHA256 because we also 
used the same for \fsbp). Because the
bucket sizes in \hibp are so small ($<500$), each bucket is stored as a single value in a
DynamoDB cell, where the key is the hash prefix. For larger leaked datasets, each bucket can be split into
multiple cells.  The client sends the 20~bit prefix of the SHA256 hash of their
password, and the server responds with the corresponding bucket.

Among all the protocols \hibp is the fastest (but also weakest in terms of
security). It takes only 220 ms on average to complete a query over WAN. Most of the time
is spent in round-trip network latency and the query to
DynamoDB. The only cryptographic operation on the client side is a SHA256
hash of the password, which takes less than 1~ms.

\paragraph{$\fsbp$.} The implementation of $\fsbp$ is more complicated than that
of $\hibp$. Because we have more than 1 billion buckets for $\fsbp$
and each password is replicated in potentially many buckets,
storing all the buckets explicitly would require too much storage
overhead. We use interval trees~\cite{wiki:intervaltrees} to quickly recover the passwords in a bucket
without explicitly storing each bucket. 
Each password $\w$ in the breach database is represented as an
interval specified by $\Bfsb(\w)$. We
stored each node of the tree as a separate cell in
DynamoDB. We retrieved the intervals
(passwords) intersecting a particular value (bucket identifier) by querying the nodes stored in DynamoDB. $\fsbp$ also
needs an estimate of the password distribution to get the interval range for a tree. We use $\dists$ as described
in~\secref{sec:hpb}. The description of $\dists$ takes 8.9~MB of space that
needs to be included as part of the client side code. This is only a one-time bandwidth cost during client installation. The client would then need to store the description to use. 

The depth of the interval tree is $\log\dblen$, where $\dblen$ is the number of
intervals (passwords) in the tree. Since each node in the tree is stored as a
separate key-value pair in the database, one client query requires $\log\dblen$ queries to
DynamoDB. To reduce this cost, we split the interval tree into $r$ trees over
different ranges of intervals, such that the $i$-th tree is over the interval
$\left[(i-1)\cdot\floor{{\nbuckets}/{r}},\
  i\cdot\floor{{\nbuckets}/{r}}-1\right]$. The passwords whose bucket
intervals span across multiple ranges are present in all corresponding
trees.  We used $r=128$, as it ensures each tree has around 4 million passwords, and the total storage overhead is less than 1\% more than if we stored one large
tree.

Each interval tree of $4$ million passwords was generated in parallel and took 3 hours in our server. 
 Each interval tree takes 400~MB of storage in DynamoDB, and in total 51~GB of space. \fsbp is the slowest among all the protocols, 
 mainly due to multiple DynamoDB calls, which cumulatively take 273 ms (half of the total
time, including network latency). This can be sped up by using a better
implementation of interval trees on top of DynamoDB, such as storing a whole subtree
in a DynamoDB cell instead of storing each tree node separately. We can also
split the range of the range tree into more granular intervals to reduce each
tree size.  Nevertheless, as the round trip time for \fsbp is small (527 ms), we
leave such optimization for future work. The maximum amount of memory used by the
server is less than 91~MB during an API call.

On the client side, the computational overhead is minimal. The client performs one
SHA256 hash computation.  
The network bandwidth consumed for sending the bucket of hash values from the
server takes on average 558~KB.

\begin{figure}
  \footnotesize
  \begin{tabular}{l|rrr|rr|r|r}\toprule
    \multirow{2}{*}{Protocol} & \multicolumn{3}{c|}{\bf{Client}} & \multicolumn{2}{c|}{\bf{Server}} & Total & Bucket\\
                                   & Crypto & Server call & Comp & DB call & Crypto & time & size\\ 
    \midrule
    HIBP & 1 & 217 & 2 & 40 & --& 220 & 413\\
    FSB & 1 & 524 & 2 & 273 & -- &  527 & 6,602\\
    GPC & 47 & 433 &  9 & 72 & 6 & 489 & 16,121\\
    IDB & 72 & 435 & 10 & 74 & 6 & 517 &  16,122\\
    \bottomrule
  \end{tabular}
  \caption{Time taken in milliseconds to make a C3 API call. 
    The client and server columns contain the time taken to perform client side and server side operations respectively. }
  \label{fig:latency}
\end{figure}

\paragraph{\idbp and \google.} Implementations of $\idbp$ and $\google$ are very
similar.  We used the same platforms --- AWS Lambda and DynamoDB --- to implement
these two schemes. All the hash computations used here are Argon2id with
default parameters, since \google in~\cite{gpc-blog:2019} uses Argon2. During
precomputation, the server computes the Argon2 hash of each username-password
pair and raises it to the power of the server's key $\key$.  These values can be
further (fast) hashed to reduce their representation size, which saves disk
space and bandwidth. However, hashing would make it difficult to rotate server key. 
We therefore store the exponentiated
Argon2 hash values in the database, and hash them further during the online
phase of the protocol.  The hash values are indexed and bucketized based on 
either $\Htl(\user\|\w)$ (for \google) or $\Htl(\user)$ (for \idbp). 
We used $\prefixlen=16$ for both \google and \idbp, as proposed
in~\cite{gpc-blog:2019}. 

We used the secp256k1 elliptic curve.
The server (for both \idbp and \google) only performs one elliptic curve exponentiation,
which on average takes 6~ms. The remaining time incurred is from network latency and calling Amazon
DynamoDB. 

On the client side, one Argon2 hash has to be computed for $\google$ and two for \idbp.  Computing
the Argon2 hash of the username-password pairs takes on an average 20~ms on the
desktop machine. We also tried the same Argon2 hash computation on a personal
laptop (Macbook Pro), and it took 8~ms.  In total, hashing and exponentiation
takes 47~ms for \google, and 72~ms (an additional 25~ms) for \idbp. The
cost of checking the bucket is also higher (compared to \hibp and \ouralgo) due
to larger bucket sizes.

\idbp takes only 28~ms more time on average than \google (due to one extra
Argon2 hashing), while also leaking no additional information about the user's
password. It is the most secure among all the protocols we discussed (should
username-password pairs be available in the leak dataset), and runs in a reasonable time.


\section{Deployment discussion}
\label{sec:discussion}
Here we discuss different ways \ccc services can be used and associated
threats that need to be considered. A \ccc service can be queried while creating a password
--- during registration or password change --- to ensure that the new password is not present in a leak. In this setting \ccc is queried from a
web server, and the client IP is potentially not revealed to the server.  This, we
believe, is a safer setting to use than the one we will discuss below. 

In another scenario, a user can directly query a \ccc service. A user can
look for leaked passwords themselves by visiting a web site or using a browser
plugin, such as 1Password~\cite{onepwHIBP} or Password
Checkup~\cite{gpc-blog:2019}.  This is the most prevalent use case of \ccc. For
example, the client can regularly check with a C3 service to proactively safeguard
user accounts from potential credential stuffing attacks.

However, there are several security concerns with this setting. Primarily, the
client's IP is revealed to the \ccc server in this setting, making it easier for
the attacker to deanonymize the user. Moreover, multiple queries from the same
user can lead to a more devastating attack. Below we give two new threat models
that need to be considered for secure deployment of C3 services (where bucket identifiers depend on the password). 

\paragraph{Regular password checks.}
A user or web service might want to regularly check their passwords with \ccc
services.  Therefore, a compromised \ccc server may learn multiple queries from
the same user. For $\fsbp$ the
bucket identifier is chosen randomly, so knowing multiple bucket identifiers for
the same password will help an attacker narrow down the password search space by
taking an intersection of the buckets, which will significantly improve attack success.

We can mitigate this problem for \ouralgo by derandomizing the client side bucket selection using a
client side state (e.g., browser cookie) so the client always selects
the same bucket for the same password. We let 
$c$ be a random number chosen by the client and stored in the browser. To check a password $\w$ with the \ccc server, the client always
picks the $j\thh$ bucket from the range $\B(\w)$, where
$j\gets f(\w\|c)\mod |\B(\w)|$.

This derandomization ensures queries from the same device are deterministic
(after the $c$ is chosen and stored). However, if the attacker can link queries of the
same user from two different devices, the mitigation is 
ineffective. If the cookie is stolen from the client
device, then the security of \fsbp is effectively reduced to that of \hpbp with
similar bucket sizes.

Similarly, if an attacker can track the interaction history between a user and a
\ccc service, it can obtain better insight about the user's passwords. For
example, if a user who regularly checks with a \ccc service stops checking a
particular bucket identifier, that could mean the associated password
is possibly in the most up-to-date leaked dataset, and the attacker can use
that information to guess the user's password(s).

\paragraph{Checking similar passwords.}
Another important issue is querying the \ccc service with multiple correlated
passwords. Some web services, like 1Password, use \hibp to check multiple passwords for a user. 
As shown by prior work, passwords chosen by the same user
are often correlated~\cite{wang2016targeted,das2014tangled,pal2019beyond}. An
attacker who can see  bucket identifiers of multiple correlated passwords can mount a
stronger attack. Such an attack would require
estimating the joint distribution over passwords. We present an initial analysis of this scenario in~\appref{app:correlated}.
  

\section{Related Work}
\label{sec:relwork}

\paragraph{Private set intersection.}
The protocol task facing C3 services is private set membership, a special case
of private set intersection (PSI)~\cite{meadows1986more,freedman2004efficient}. 
The latter allows two parties to find the intersection 
between their private sets without revealing any additional information. 
Even state-of-the-art PSI protocols do not scale to the sizes needed for our
application. For example, Kiss et al.~\cite{kiss2017private} proposed an efficient PSI protocol
for unequal set sizes based on oblivious pseudo-random functions
(OPRF). It performs well for sets with millions of
elements, but the bandwidth usage scales proportionally to the size of the
leak dataset and so performance is prohibitive in our setting. 
Other efficient solutions to
PSI~\cite{kolesnikov2016efficient,pinkas2015phasing,pinkas2018efficient,chen2017fast}
have similarly prohibitive bandwidth usage. 

Private information retrieval (PIR)~\cite{chor1995private} is another
cryptographic primitive used to retrieve information from a server. Assuming the
server's dataset is public, 
the client can use PIR to privately retrieve the entry corresponding to their
password from the server.  But in our setting we also 
want to protect the privacy of the dataset leak. 
Even if we relaxed that security requirement, the most advanced
PIR schemes~\cite{aguilar2016xpir,olumofin2011revisiting} require exchanging
large amounts of information over the network, so they are not useful for checking leaked passwords.  
PIR with two non-colluding servers can provide better security~\cite{dvir20152} 
than the bucketization-based \ccc schemes,
with communication complexity sub-polynomial in the size of the leaked
dataset. It requires building a \ccc service with two servers
guaranteed to not collude, which may be practical if we assume that the breached credentials are public information. However, with a dataset size of at least 1 billion credentials, the cost of one query is likely still too large to be practical.

\paragraph{Compromised credential checking.} To the best of our knowledge, HIBP
was the first publicly available C3 service. Junade Ali designed the current HIBP
protocol which uses bucketization via prefix hashing to limit leakage. Google's
Password Checkup extends this idea to use PSI, which minimizes the information about
the leak revealed to clients. They also moved to checking username,
password pairs. 

Google's Password Checkup (\google) was described in a paper by Thomas et
al.~\cite{thomas2019protecting}, which became available to us after we began
work on this paper. They introduced the design and implementation of
\google and report on measurements of its initial deployment.
They recognized that their first generation protocol leaks some bits of
information about passwords, but did not analyze the potential impact on
password guessability. They also propose (what we call) the ID-based protocol as
a way to avoid this leakage. Our paper provides further motivation for their
planned transition to it. 

Thomas et al.~point out that password-only C3 services are likely to have high
false positive rates. Our new protocol \fsbp, being in the password-only
setting, inherits this limitation.
That said, should one want to do password-only C3 (e.g., because storing
username, password pairs is considered too high a liability given their utility
for credential tweaking attacks~\cite{pal2019beyond}), \fsbp represents the best known approach.

Other C3 services include, for example, Vericlouds~\cite{leakcheck:vericlouds} and
GhostProject~\cite{leakcheck:ghostproject}. They allow users to register with an email
address, and regularly keep the user aware of any leaked (sensitive)
information associated with that email. Such services send 
information to the email address, and the user implicitly authenticates (proves
ownership of the email) by having access to the email address. These services
are not anonymous and must be used by the primary user.
Moreover, these services cannot be used for password-only C3.

\paragraph{Distribution-sensitive cryptography.} 
Our $\fsbp$ protocol uses an estimate of the distribution of human chosen
passwords, making it an example of distribution-sensitive cryptography, in which
constructions use contextual information about distributions in order to improve
security.  
Previous distribution-sensitive approaches include Woodage et
al.~\cite{woodage2017new}, who introduced a new type of
secure sketch~\cite{dodisetal:2004} for
password typos, and Lacharite et al.'s~\cite{lacharite2018frequency}
frequency-smoothing encryption. While similar in that they use distributional
knowledge, their constructions do not apply in our setting.



\section{Conclusion}
\label{sec:conclusion}
We explore different settings and threat models associated with checking
compromised credentials (C3). The main concern is the secrecy of the user
passwords that are being checked. We show, via simulations, that the existing
industry deployed C3 services (such as \hibp and \gpc) do not provide a satisfying level of 
security.  An attacker who obtains the query to such a C3 service and the
username of the querying user can more easily guess the user's password. We
give more secure C3 protocols for checking leaked passwords and
username-password pairs. We implemented and deployed different C3 protocols on
AWS Lambda and evaluated their computational and bandwidth overhead.  We finish
with several nuanced threat models and deployment discussions that should be
considered when deploying C3 services.


\section*{Acknowledgments}
We would like to thank the authors of~\cite{thomas2019protecting} for sharing their work with us prior to publication. This work was supported in part by NSF grants CNS-1564102, CNS-1514163, and CNS-1704527.


\bibliographystyle{ACM-Reference-Format}   
\bibliography{bib}
\appendix
\section{Correlation between username and passwords}
\label{app:user-pw-correlation}
In~\secref{sec:prelims}, we choose to model the username and password choices of previously uncompromised users independently. 

To check whether this assumption would be valid or not, we randomly sampled $10^5$ username-password pairs from the dataset used in~\secref{sec:empirical} and calculated the Levenshtein edit distance between each username and password in a pair. We have recorded the result of this experiment in~\figref{fig:edit-distance}.

\begin{figure}
  \centering\footnotesize
  \begin{tabular}{r|r}\toprule
    Distance & \% \\
    \midrule
    0 & 1.2 \\ 
    $\leq 1$ & 1.7 \\
    $\leq 2$ & 2.3 \\ 
    $\leq 3$ & 3.1 \\ 
    $\leq 4$ & 4.6 \\\bottomrule
  \end{tabular}
  \caption{Statistics on samples with low edit distance between username and password, as a percentage of a random sample of $10^5$ username-password pairs. }
  \label{fig:edit-distance}
\end{figure}

We found that the mean edit distance between a username and password was 9.4, while the mean password length was 8.4 characters and the mean username length was 10.0 characters. This supports that while there are some pairs where the password is almost identical to the username, a large majority are not related to the username at all. 

The statistics on edit distance between username and password in our dataset are similar to the statistics in the dataset used by Wang et al.~\cite{wang2016targeted}, who determined that approximately 1-2\% of the English-website users used their email prefix as their password.

This data does not prove that usernames and passwords are independent. However, even if an attacker gains additional advantage in the few cases where a user chooses their username as their password, the overwhelming majority of users have passwords that are not closely related to their usernames.


\section{Bandwidth of \ouralgo}\label{app:fsbp}
To calculate the maximum bandwidth used by \ouralgo, we use the balls-and-bins formula as described in Section~\ref{sec:prelims}. Each password $\w$ is stored in $|\B(\w)|$ buckets, so the total number of balls, or passwords being stored, can be calculated as 
\begin{align*}
m
& = \sum_{w\in\breachDB}|\B(\w)|\\
& = \sum_{w\in \PW_{\qbar}\cap \breachDB} \nbuckets + \sum_{\w\in \breachDB\setminus \PW_{\qbar}} \ceil{\frac{\nbuckets\cdot \dists(\w)}{\qdist}}\\
& \le |\PW_{\qbar}\cap \breachDB|\cdot\nbuckets +
\sum_{\w\in \breachDB\setminus \PW_{\qbar}} \left(\frac{\nbuckets\cdot\dists(\w)}{\qdist} + 1\right)\\
& \le \nbuckets\cdot{\qbar} + \nbuckets\cdot\frac{1}{\qdist} + \dblen 
\end{align*}
The first equality is obtained by replacing the definition of $\B(\w)$; the
second inequality holds because $\ceil{x} \le x + 1$; the third inequality holds because $S\subseteq W$.

The number of bins $n = \nbuckets$, and $m > n\log n$, if $\qbar > \log n$. Therefore, the maximum bucket size for \ouralgo would with high probability be no more than $2\cdot\left({\qbar} + \frac{1}{\qdist} + \frac{\dblen}{\nbuckets}\right)$.


\section{Proof of Theorem~\ref{thm:bucket-entropy}}
\label{sec:bucket-entropy-proof}

	First we calculate the general form of the $\Guessgone_{\Bfsb}$ advantage. Then, we show that for $\q\leq \qbar$, $\advone{\Bfsb}(\q)=\advzero(\q)$, and we bound the difference in the advantages for the games when $\q > \qbar$. 
	
	\begin{align*}
	\advone{\Bfsb}(\q)
	=&\sum_{\user}\sum_{b}
	\max_{\substack{\wss{\q}\\\in\A(b)}} \sum_{i=1}^q \frac{\Prob{W=\w_i\wedge U=\user}}{|\Bfsb(\w_i)|}\\
	=&\sum_{b}	\max_{\substack{\wss{\q}\\\in\A(b)}} \sum_{i=1}^q \frac{\dists(\w_i)}{|\Bfsb(\w_i)|}
	\end{align*}
	The second step follows from the independence of usernames and passwords in the uncompromised setting.
	
	We will use $\PW_{\qbar}$ to refer to the top $\qbar$ passwords according to password distribution $\dists=\distw$, and $\w_{\qbar}$ to refer to the $\qbar$th most popular password according to $\dists$.
	
	For $\w\in \PW_{\qbar}$, we can calculate the fraction in the summation exactly as 
	$\frac{\dists(\w)}{|\Bfsb(\w)|} = \frac{\dists(\w)}{\nbuckets}$. 
	
	For any other 
	$\w\in \PW\setminus \PW_{\qbar}$, we can bound the fraction using the bound on the number of buckets a password is placed in. 
	\begin{equation*}
	\frac{\nbuckets\cdot\dists(w)}{\dists(\w_{\qbar})}\leq |\Bfsb(\w)| < \frac{\nbuckets\cdot\dists(w)}{\dists(\w_{\qbar})}+1.
	\end{equation*}
	We can use the lower bound on $|\Bfsb(\w)|$ to find that
	\begin{equation*}
	\frac{\dists(\w)}{|\Bfsb(\w)|}\leq \frac{\dists(\w_{\qbar})}{\nbuckets}.
	\end{equation*}
	Using the upper bound on $|\Bfsb(\w)|$,
	\begin{equation*}
	\frac{\dists(\w)}{|\Bfsb(\w)|} > \frac{\dists(\w)}{\frac{\nbuckets\cdot\dists(\w)}{\dists(\w_{\qbar})}+1}
	=\frac{\dists(\w)\cdot \dists(\w_{\qbar})}{\nbuckets\cdot \dists(\w)+\dists(\w_{\qbar})}
	=\frac{\dists(\w_{\qbar})}{\nbuckets+\frac{\dists(\w_{\qbar})}{\dists(\w)}}
	\end{equation*}
	Since the values of $\frac{\dists(\w)}{|\Bfsb(\w)|}$ are always larger for $\w\in \PW_{\qbar}$, the values of $\wss{\q}$ chosen for each bucket will be the top $\qbar$ passwords overall, along with the top $\q-\qbar$ of the remaining passwords in the bucket, ordered by $\frac{\dists(\cdot)}{|\Bfsb(\cdot)|}$. 
	
	To find an upper bound on $\advone{\Bfsb}(\q)$, 
	\begin{align*}
	\sum_{b}&	\max_{\substack{\wss{\q}\\\in\A(b)}} \sum_{i=1}^q \frac{\dists(\w_i)}{|\Bfsb(\w_i)|}\\
	& \le \sum_{b}\left(\sum_{w\in \PW_{\qbar}}\frac{\dists(\w)}{\nbuckets} + (\q-\qbar)\frac{\dists(\w_{\qbar})}{\nbuckets}\right)\\
	& = \lambda_{\qbar} + (q-\qbar)\cdot\dist_{\qbar}
	\end{align*}
	For $\q\leq \qbar$, we have $\advone{\Bfsb}(\q) \le \lambda_{\qbar}$.

	To find a lower bound on $\advone{\Bfsb}(\q)$, let $\w_{\qbar+1}^*,\dots,\w_{\q}^*$ be the $\q-\qbar$ passwords in $\A(b)\setminus \PW_{\qbar}$ with the highest probability of occurring, according to $\dists(\cdot)$. 
	\begin{align*}
	\sum_{b}	&\max_{\substack{\wss{\q}\\\in\A(b)}} \sum_{i=1}^q \frac{\dists(\w_i)}{|\Bfsb(\w_i)|}\\
	& > \sum_{b}\left(\sum_{\w\in \PW_{\qbar}}\frac{\dists(\w)}{\nbuckets} + \sum_{i=\qbar+1}^{\q} \frac{\dists(\w_{\qbar})}{\nbuckets+\frac{\dists(\w_{\qbar})}{\dists(\w_i^*)}}  \right)\\
	& \geq \lambda_{\qbar} +\sum_{i=\qbar+1}^{\q}\ceil{\frac{\nbuckets\cdot\dists(\w^*_i)}{\dists(\w_{\qbar})}}\cdot \frac{\dists(\w_{\qbar})}{\nbuckets+\frac{\dists(\w_{\qbar})}{\dists(\w^*_i)}} \\
	& \geq \lambda_{\qbar}
	+\sum_{i=\qbar+1}^{\q}\frac{\nbuckets\cdot\dists(\w^*_i)}{\nbuckets+\frac{\dists(\w_{\qbar})}{\dists(\w^*_i)}}
	\geq \lambda_{\qbar}
	+\sum_{i=\qbar+1}^{\q}\frac{\dists(\w^*_i)}{1+\frac{\dists(\w_{\qbar})}{\dists(\w^*_i)\cdot\nbuckets}}\\
	& \geq \lambda_{\qbar}+\sum_{i={\qbar}+1}^{\q} \dists(\w^*_i)/2 \geq \lambda_{\qbar} +
	(\lambda_{\q}-\lambda_{\qbar})/2 = \frac{\lambda_{\q} + \lambda_{{\qbar}}}{2}
	\end{align*}
	Therefore, $\secloss_{\q} \geq \frac{\lambda_{\q} - \lambda_{\qbar}}{2}$.
	
	Note, for every password to be assigned to a bucket,
	$\nbuckets \ge \dists(\w_{\qbar})/\dists(\w)$, or for all $\w\in \PW$,
	$\frac{\dists(\w_{\qbar})}{\dists(\w)\cdot\nbuckets} \le 1$.  

\section{Attacks on Correlated Password Queries}
\label{app:correlated}
\balance
An adversary might gain additional advantage in guessing passwords underlying
\ccc queries when queries are correlated. For example, when creating a new
password, a client might have to generate multiple passwords until the chosen
password is not known to be in a leak.  These human-generated passwords are
often related to each other.  Users also pick similar passwords across different
websites~\cite{das2014tangled,pearman2017let,pal2019beyond,wang2016targeted}.
If such passwords are checked with a \ccc server (maybe by a password
manager~\cite{onepwHIBP}), and the attacker could identify multiple queries from
the same user (for example, by joining based on the IP address of the client),
then the attacker could mount an attack on the correlated queries. As we described, the adversary does need a lot of information to mount such an attack, but the idea is worth exploring, since attacks on correlated queries have not been analyzed before.

Let $\{\transform_{(u, \w)}\}$ be a family of distributions, such that for a given
$u\in\users$, $\w\in\PW$, $\transform_{(u, \w)}$ models a probability distribution across all passwords related to
$\w$ for the user $u$. 
For example, the probability of user $u$ choosing a password $\w_2$ given that they already have password $\w_1$ is $\transform_{(u, \w_1)}(\w_2)$.

The attack game for correlated password queries is given
in \figref{fig:corr-game}.  A client first picks a password $\w_1$ for some web
service and learns that the password is present in a leaked data. The client
then picks another password $\w_2$, potentially correlated to $\w_1$, that is
not known to be in a leak and is accepted by the web service. (For simplicity,
we only consider two attempts to create a password. However, our analysis can easily be
 extended to more than two attempts.)  In the game, the password $\w_2$ is
chosen from the set of passwords not stored by the server, according to the
distribution of passwords from the transformation of $\w_1$. The adversary,
given the buckets $b_1$ and $b_2$, tries to guess the final password, $\w_2$.

\begin{figure}[t]
  \centering
  \fpage{0.25}{
    $\underline{\corrg_{\B}^{\advA}}(\q)$\\
    $(u,\w_1)\genfrom{\dist} \users \times \PW$\\
    $\w_2\genfrom{\tau_{(u, \w_1)}} \PW\setminus\breachDB_w$\\
    $b_1 \gets \B(\w_1);\;\; b_2 \gets \B(\w_2)$\\
    $\{\typo_1,\ldots,\typo_{\q}\} \gets \advA(u, b_1,b_2)$\\
    return $\w_2\in \{\typo_1,\ldots,\typo_{\q}\}$
  }
  \caption{A game to describe a simple correlated password query scenario. Here, we let $\breachDB_w$ be the set of all passwords in the breach dataset.}
  \label{fig:corr-game}
\end{figure}

To find the most likely password given the buckets accessed (the maximum a posteriori estimation), an adversary would want to calculate the following:
\begin{align*}
  \setlength{\abovedisplayskip}{4pt}
  \setlength{\belowdisplayskip}{4pt}
&\argmax_{\w}\; \CondProb{w_2=\w}{b_1,b_2}\\
& \mytab\mytab\mytab=\argmax_{\w}\; \CondProb{b_1,b_2}{w_2=\w}\cdot \frac{\Prob{w_2=\w}}{\Prob{b_1,b_2}}\\
& \mytab\mytab\mytab=\argmax_{\w}\; \CondProb{b_1,b_2}{w_2=\w}\cdot \Prob{w_2=\w}\;.
\end{align*}
Note that we view $b_1, b_2$ as fixed values for the two buckets, not random variables, but we use the notation above to save space. 
We can separate $\CondProb{b_1,b_2}{\w_2=\w}$ into two parts.
\begin{align*}
\CondProb{b_1,b_2}{w_2=\w}
&=\CondProb{b_2}{w_2=\w}\cdot \CondProb{b_1}{w_2=\w,b_2}\\
&=\CondProb{b_2}{w_2=\w}\cdot \CondProb{b_1}{w_2=\w}
\end{align*}
The second step follows from the independence of $b_1$ and $b_2$ given $\w_2$. 

We know that the first term $\CondProb{b_2}{w_2=\w}$ will be 0 if the password $\w$ does not appear in bucket $b_2$. For \fsbp, the buckets that do contain $\w$ have an equally probable chance of being the chosen bucket. For $\hpbp$, only one bucket will have a nonzero probability for each password. 
\bnm
\CondProb{b_2}{w_2=\w}=
\begin{cases}
	\frac{1}{|\B(\w)|} &\text{ if }b_2\in \B(\w)\\
	0 &\text{ otherwise}
\end{cases}.
\enm

Then, to find $\CondProb{b_1}{w_2=\w}$, we need to sum over all passwords that are in $b_1$. We define $\creds_w$ as the set of all possible passwords.
\begin{align*}
\CondProb{b_1}{w_2=\w}
&= \sum_{\w_1\in \creds_w}\CondProb{b_1\land \w_1}{w_2=\w}\\
&= \sum_{w_1\in \A(b_1)} \CondProb{\w_1}{w_2=\w}\\
&=\sum_{w_1\in \A(b_1)} \frac{\CondProb{w_2=\w}{\w_1}\cdot \Prob{\w_1}}{\Prob{w_2=\w}}\;.
\end{align*}

Combining the $\argmax$ expression with the equations above, 
 the adversary therefore needs to calculate the following to find the most likely $\w$:
\begin{align}
\argmax_{\w\in\A(b_2)}\;\frac{1}{|\B(\w)|\cdot \Prob{w_2=\w}}\cdot \sum_{\w_1\in \A(b_1)} \CondProb{w_2=\w}{\w_1}\cdot \Prob{\w_1}\nonumber \\
= \argmax_{\w\in\A(b_2)}\;\frac{1}{|\B(\w)|\cdot \Prob{w_2=\w}}\cdot\sum_{\w_1\in \A(b_1)} \transform_{(u,w_1)}(\w)\cdot\Prob{\w_1}. \label{eq:corr-exp}
\end{align}

In practice, it would be infeasible to compute the above values exactly. For one, the set of all possible passwords is very large, so it would be difficult to iterate over all of the passwords that could be in a bucket. We also don't know what the real distribution $\transform_{(u,w)}$ is for any given $u$ and $w$. For our simulations, we estimate the set of all possible passwords in a bucket using the list constructed by the attack from~\secref{sec:empirical}. To estimate
$\CondProb{w_2=w}{\w_1}$, we use the password similarity measure described
in~\cite{pal2019beyond}, transforming passwords into vectors and calculating the dot product of the vectors. 

\begin{figure}
	\centering\footnotesize
	\begin{tabular}[t]{l|l|rrrr}\toprule
		{Protocol} & Attack &$\q=1$ & $10$ & $10^2$ & $10^3$\\\midrule
		{Baseline}& single-query & 0.2 & 1.0 & 2.9 & 6.4 \\ \midrule 
		\multirow{2}{*}{\hpbp ($\prefixlen=16$)} & single-query & 18.8 & 31.9 & 45.9 & 58.4 \\
		& correlated & 8.8 & 10.3 & 13.0 & 26.0 \\\midrule
		\multirow{2}{*}{\ouralgo ($\qbar=10^2$)} & single-query & 0.2 & 1.0 & 2.9 & 8.4 \\
		& correlated & 2.7 & 3.3 & 4.6 & 11.5 \\\bottomrule
	\end{tabular}
	\caption{Comparison of attack success rate given $q$ queries on our correlated password test set. All success rates are in percent (\%) of the total number of samples (5,000) guessed correctly.}
	\label{fig:corr-table}
\end{figure}

To simulate the correlated-query setting, we used the same dataset as
in~\secref{sec:empirical}. We first trim the test dataset $\testdata$ down to users with passwords both present in the leaked dataset and absent from the leak dataset. We then sample 5,000 of these users and randomly choose the first password from those
present in the leaked dataset and the second password from the ones not in the leaked dataset.  
This sampling most closely simulates the situation where users query a \ccc
server until they find a password that is not present in the leaked data.  We
assume, as before, the adversary knows the username of the querying user.  

For the experiment, we give the attacker access to the leak dataset and the buckets associated with
the passwords $w_1$ and $w_2$. Its goal is to guess the second password, $w_2$. The attacker first narrows down the list constructed in the attack from~\secref{sec:empirical} to only passwords in bucket $b_2$. As a reminder, we refer to this list of passwords as $\Abar(b_2)$. The attacker then computes the similarity between
every pair of passwords in $\Abar(b_2)\times\Abar(b_1)$, which is $\Abar(b_1)$
times the complexity of running a single-query attack (as described
in~\secref{sec:empirical}).  It reorders the list of passwords $\Abar(b_2)$ using an estimate of the value in \eqref{eq:corr-exp}.

The results of this simulation are in~\figref{fig:corr-table}.  We also measured the success rate of the baseline and regular single-query attacks on recovering the same passwords $w_2$. 

It turns out that this correlated attack performs significantly worse than the single-query attack when the passwords are bucketized using \hpbp. For \ouralgo, the correlated attack performs better, but not by a large amount. Although there is an improvement in the correlated attack success for $\fsbp$, the overall success rate of the attack is still worse than both attacks against $\hpbp$. 

The overall low success rate of the correlated attacks is likely due to the error in
estimating the password similarity, $\transform_{(\cdot,\w_1)}(\w)$.  Though the
similarity metric proposed by~\cite{pal2019beyond} is good enough for generating
ordered guesses for a targeted attack, it doesn't quite match the type of correlation among passwords used in the test set. Even though we picked two passwords from the same user for each test point, the passwords were generally not that similar to each other. About 7\% of these password pairs had an edit distance of 1, and only 14\% had edit distances of less than 5. The similarity metric we used to estimate $\transform_{(\cdot,\w_1)}(w)$ heavily favors passwords that are very similar to each other. 

The single-query attack against \hpbp does quite well already, so the correlated attack likely has a lower success rate because it rearranges the passwords in $\Abar(b_2)$ according to their similarity to the passwords in $\Abar(b_1)$. In reality, only a small portion of the passwords in the test set are closely related.
On the other hand, the construction of \ouralgo results in approximately equal probabilities that each password in the bucket was chosen, given knowledge of the bucket. We expect that the success rate for the correlated attack against \ouralgo is higher than that of the single-query attack because the reordering helps the attacker guess correctly in the test cases where the two sampled passwords are similar. 

We believe the error in estimation is amplified in the
attack algorithm, which leads to a degradation in performance. If the attacker knew $\transform$ perfectly and could calculate the exact values in \eqref{eq:corr-exp}, the correlated-query attack would perform better than the single-query attack. However, in reality, even if we know that two queries came from the same user, it is difficult to characterize the exact correlation between the two queries. If the estimate is wrong, then the success of the correlated-query attack will not necessarily be better than that of the single-query attack.
Given that our attack did not show a substantial advantage for attackers, it is still an open question to analyze how damaging attacks on correlated queries can be.

\end{document}